\documentclass[10pt,journal,twoside]{IEEEtran}
\IEEEoverridecommandlockouts
\usepackage[noadjust]{cite}

\usepackage{amsmath,amssymb,amsfonts}
\usepackage{algpseudocode}
\usepackage{algorithm}
\usepackage{amsthm}
\usepackage{enumitem}
\usepackage{graphicx}
\usepackage{textcomp}
\usepackage{xcolor}
\usepackage[colorlinks,citecolor=blue,linkcolor=blue,urlcolor=magenta]{hyperref}
\usepackage{booktabs}
\usepackage{cases}
\usepackage{bbm,bm}
\allowdisplaybreaks[4]
\usepackage{siunitx}
\usepackage{upgreek}
\usepackage{mathrsfs}
\usepackage{stfloats}
\usepackage{balance}
\usepackage{orcidlink}
\usepackage[caption=false, font=footnotesize]{subfig}
\usepackage{stfloats}
\usepackage{float}
\usepackage{comment}
\usepackage[mathscr]{eucal}

\def\BibTeX{{\rm B\kern-.05em{\sc i\kern-.025em b}\kern-.08em
    T\kern-.1667em\lower.7ex\hbox{E}\kern-.125emX}}

\theoremstyle{definition}
\newtheorem{theorem}{Theorem} 
\newtheorem{proposition}{Proposition} 
\newtheorem{definition}{Definition}
\newtheorem{remark}{Remark}

\renewcommand{\qed}{\hfill\IEEEQED}

\newcommand{\pd}[2]{\frac{\partial #1}{\partial #2}}

\newcommand{\herm}{\mathsf{H}}
\newcommand{\trans}{{\mathsf{T}}}

\newcommand{\Tr}{{\mathsf{tr}}}
\DeclareMathOperator{\BTr}{\overline{\bm{\mathsf{BTR}}}}
\DeclareMathOperator{\Mdiag}{\overline{\bm{\mathsf{DIAG}}}}
\DeclareMathOperator{\vdiag}{\overline{\bm{\mathsf{diag}}}}
\DeclareMathOperator{\Bodot}{\bm{\odot_{\mathsf{B}}}}
\DeclareMathOperator{\vect}{\overline{\bm{\mathsf{vec}}}}
\DeclareMathOperator{\Rank}{\mathsf{rank}}

\DeclareMathOperator{\st}{\mathrm{s.t.}}
\DeclareMathOperator{\sinc}{\mathrm{sinc}}

\newcommand{\boldtheta}{{\bm \theta}}
\newcommand{\boldtau}{{\bm \tau}}
\newcommand{\boldnu}{{\bm \nu}}
\newcommand{\boldLambda}{{\bm\Lambda}}
\newcommand{\boldXi}{{\bm\Xi}}
\newcommand{\boldDelta}{\pmb{\Delta}}
\newcommand{\jj}{\jmath}
\newcommand{\e}{\mathbf{e}}
\newcommand{\gr}{\g_{\mathsf r}}
\newcommand{\gi}{\g_{\mathsf i}}
\newcommand{\I}{\mathbf{I}}
\newcommand{\fc}{f_{\mathsf c}}
\newcommand{\gtf}{g_{\mathsf{tf}}}
\newcommand{\gdd}{g_{\mathsf{dd}}}
\newcommand{\noisevar}{\sigma^2}
\newcommand{\Ts}{T_{\mathsf s}}
\newcommand{\Tsym}{T_{\mathsf{sym}}}

\newcommand{\Mcp}{M_{\mathsf{pfx}}}
\newcommand{\SNR}{\varpi_{\mathsf{sym}}}
\renewcommand{\SNR}{\varpi}

\newcommand{\g}{\bm{g}}
\newcommand{\A}{\mathbf{A}}
\newcommand{\U}{\mathbf{U}}
\newcommand{\Rs}{\mathbf{\Phi}_{\s}}
\newcommand{\Rx}{\mathbf{\Phi}_{\x}}

\newcommand{\G}{\bm{\mathcal G}}
\newcommand{\NU}{\bm{\mathcal V}}
\newcommand{\TAU}{\bm{\mathcal T}_{\!\!\mathsf{ps}}}

\newcommand{\F}{\mathbf{F}}

\renewcommand{\S}{\underline{\mathbf{S}}}
\newcommand{\Scp}{\S_{\mathsf{pfx}}}
\newcommand{\Y}{\underline{\mathbf{Y}}}
\newcommand{\R}{\underline{\mathbf{R}}}
\newcommand{\Rcp}{\R_{\mathsf{pfx}}}
\newcommand{\X}{\mathbf{X}}
\newcommand{\Z}{\underline{\mathbf{Z}}}
\newcommand{\D}{\mathbf{D}}
\newcommand{\V}{\bm{V}}
\newcommand{\T}{\bm{T}}

\newcommand{\y}{\mathbf{y}}
\newcommand{\s}{\mathbf{s}}
\newcommand{\x}{\mathbf{x}}
\newcommand{\xw}{\x_{\mathsf{w}}}
\newcommand{\z}{\mathbf{z}}
\renewcommand{\r}{\mathbf{r}}

\newcommand{\FIM}{{\mathbf{J}}}
\renewcommand{\H}{{\mathbf{H}}}
\newcommand{\J}{\bm{\mathscr{J}}}
\newcommand{\CRB}{\mathbf{C}}

\newcommand{\OFDM}{\mathsf{OFDM}}
\newcommand{\AFDM}{\mathsf{AFDM}}
\newcommand{\ODDM}{\mathsf{ODDM}}
\newcommand{\OTFS}{\mathsf{OTFS}}
\newcommand{\OCDM}{\mathsf{OCDM}}

\newcommand{\sw}{\s_{\mathsf{w}}}

\newcommand{\Uw}{\U_{\mathsf{w}}}
\newcommand{\UOFDM}{\U_{\OFDM}}
\newcommand{\UOTFS}{\U_{\OTFS}}
\newcommand{\UAFDM}{\U_{\AFDM}}
\newcommand{\UOCDM}{\U_{\OCDM}}
\newcommand{\UODDM}{\U_{\ODDM}}

\newcommand{\Rsw}{\mathbf{\Phi}_{\s,\mathsf{w}}}

\newcommand{\Rxw}{\mathbf{\Phi}_{\x,\mathsf{w}}}
\newcommand{\pt}{p_{\mathsf{t}}}
\newcommand{\boldrho}{\bm{\varrho}}

\begin{document}
\title{On A Unified Cram\'e{r}-Rao Bound\\ Framework for Joint Delay-Doppler\\ Estimation with Multi-Carrier Waveforms}
\author{
\IEEEauthorblockN{Zi-Jie Wang$^{\orcidlink{0009-0009-6702-1811}}$, Xudong Wang$^{\orcidlink{0000-0002-1353-1420}}$,~\IEEEmembership{Fellow,~IEEE}, and Giuseppe Caire$^{\orcidlink{0000-0002-7749-1333}}$,~\IEEEmembership{Fellow,~IEEE}}

\thanks{Zi-Jie Wang is with the Department of Electrical and Computer Engineering, National University of Singapore, Singapore 117583, and also with the Global College, Shanghai Jiao Tong University, Shanghai 200240, China (e-mail: zijie.wang@u.nus.edu).}
\thanks{Xudong Wang is with the Internet of Things Thrust, The Hong Kong University of Science and Technology (Guangzhou), Guangzhou, Guangdong 511455, China (e-mail: wxudong@ieee.org).}
\thanks{Giuseppe Caire is with the Faculty of Electrical Engineering and Computer Science, Technical University of Berlin, Berlin 10587, Germany (email: caire@tu-berlin.de).}
}

\maketitle

\begin{abstract}
Multi-carrier waveforms (e.g., OFDM, OTFS, ODDM, OCDM, AFDM, etc.) promise potentials for sensing in dynamic channels. Characterizing their fundamental estimation-theoretic limits is a key focus in signal processing community. In this paper, a unified, analytically tractable Cram\'{e}r-Rao bound (CRB) framework is established to benchmark joint delay-Doppler estimation performance with diverse multi-carrier waveforms. Under generic sensing setups, it accounts for effects of multi-target, unknown channel gains, channel dynamics, pulse shaping, etc., on the estimation process. Analysis reveals that CRBs are determined by the power profile of the transmitted \emph{time-frequency domain symbols}, 
rendering the CRB performance of different waveforms identical as long as their time-frequency symbol-level power distributions coincide. Hence, different waveforms yield the same theoretical limits on delay-Doppler CRBs; CRB gains may emerge when certain waveform-specific modulation reshapes the power distribution of some particular \emph{data} realization to a time-frequency \emph{symbol}-level power profile that is better aligned with the channel- and pulse-shaping-induced Fisher information matrix structure. To investigate fundamental performance limits of delay-Doppler estimation, the Pareto front of the delay-Doppler CRB region is identified by optimizing CRBs with respect to the power profiles. Corresponding solution naturally adapts to the delay-Doppler preference, pulse shaping spectrum, and delay-Doppler structure in channels. It provides a principled  waveform design guideline for optimal delay-Doppler estimation.
Finally, numerical results reveal i) the influence of various system parameters on the sensing performance, ii) the tightness of the derived bounds to practical estimators, and iii) the remarkable fact that some well-established CRB expressions (i.e., under uniform power among time-frequency symbols) can be far from the system's Pareto-optimal CRBs, outlining the sub-optimality of such allocation strategies.
\end{abstract}

\begin{IEEEkeywords}
	Delay-Doppler processing, fundamental limits and performance analysis, waveform optimization.
\end{IEEEkeywords}
\section{Introduction}

\IEEEPARstart{W}{ith} the evolution toward the next generation of wireless systems, high-dynamic channels, e.g., in ultra-high mobility scenarios, non-stationary environments, etc., have become an intense subject of investigation \cite{Overview_waveform_OTFS,DD_waveform,Overview_waveform_AFDM,AFDM_communication,ODDM_overall,OCDM_overall,Multicarrier_survey}. These challenging conditions arise from fast speed, massive antenna arrays, mmWave/THz \cite{TVT_BA} band characteristics, and increasingly complex RF front-ends. To tackle them effectively, many advanced multi-carrier waveforms have emerged as alternatives to orthogonal frequency division multiplexing (OFDM), including delay-Doppler (DD) domain designs like orthogonal time-frequency space (OTFS) \cite{Overview_waveform_OTFS,OTFS_ISAC,entropy_OTFS}, orthogonal delay-Doppler division multiplexing (ODDM) \cite{ODDM_overall}, and chirp-based schemes such as orthogonal chirp-division multiplexing (OCDM) \cite{OCDM_overall,OCDM_original} and affine frequency division multiplexing (AFDM) \cite{AFDM_communication,Overview_waveform_AFDM}. These waveforms embrace complex channel dynamics rather than trying to suppress them, delivering superior robustness, lower processing complexity, and more efficient resource utilization \cite{Overview_waveform_OTFS,OCDM_overall,ODDM_overall,AFDM_communication}. Thanks to recent advances in integrated sensing and communication (ISAC) \cite{FanLiu6G,wang2025device,Xiont_TIT,TVT_BA,wang_2026_TIT}, these waveforms are potentially useful for (radar) sensing as well \cite{Multicarrier_survey,braun2014ofdm}, holding great promise for accurate DD estimation in future high-mobility channels \cite{Overview_waveform_OTFS}.

Unveiling and understanding fundamental estimation-theoretic performance limits play a critical role in guiding system design and optimization, and serve as essential benchmarks for practical sensing algorithms \cite{wang2025device,TWC_Hua,Xiont_TIT,wang_2026_TIT}. However, the majority of existing works primarily focus on signaling design and algorithm optimization, with sensing performance predominantly evaluated through numerical simulations (e.g., \cite{OFDM_AMT_TWC,entropy_OTFS,OTFS_ISAC,OTFS_ISAC_MIMO}), while only a limited number of studies conduct estimation-theoretic analysis.

Among the considered waveforms, OFDM is one of the few cases for which sensing-oriented theoretical analysis has been conducted in a relatively systematic manner. A comprehensive treatment of OFDM radar is provided in \cite{braun2014ofdm}, where the Cram\'{e}r-Rao bound (CRB) for joint range-velocity (equivalently, DD) estimation is characterized. In \cite{wang2025device}, closed-form Fisher information matrix (FIM) expressions are derived for joint range–angle estimation using OFDM-based ISAC signals, enabling the theoretical evaluation of multi-target localization performance in distributed MIMO radar systems. These studies establish an important analytical foundation for OFDM-based radar/ISAC implementations.

For OTFS, the Bayesian CRB under random parameter models is derived in \cite{OTFS_CRB_uplink}, and is subsequently employed for pilot design. The DD CRBs of OTFS and OFDM are derived and compared in \cite{OFDM+OTFS,OFDM+OTFS2}, where the resulting bounds are used to assess the performance of (almost) maximum-likelihood DD estimators. For AFDM, the CRB is derived in \cite{AFDM_ISAC} to examine the impact of the chirp-rate parameter on DD estimation performance. In \cite{cao2025agileaffinefrequencydivision}, the CRB is further expressed as a function of chirp parameters for Agile-AFDM, and these parameters are further optimized accordingly. Joint angle-DD CRBs for AFDM are developed in \cite{AFDM_ISAC_MIMO}, accounting for both near- and far-field targets. For ODDM, the CRB for joint angle-DD estimation is derived in \cite{ODDM_ISAC}, together with CRB-minimizing beamforming matrix designs for THz ISAC systems. In \cite{ODDM+FMCW}, a combined ODDM and frequency modulated continuous wave is proposed as a promising ISAC waveform, and its CRB performance on DD estimation is analyzed. Under a distributed MIMO radar setting, a generic joint position-velocity CRB is formulated in \cite{xia2025jointlocationvelocityestimation}, with numerical evaluations provided for waveforms such as OFDM and OCDM.

Although investigations of the estimation-theoretic performance of these waveforms have been conducted to different extents, several critical research gaps remain as follows.
\begin{enumerate}
	\item For certain waveforms, such as ODDM and OCDM, estimation-theoretic analysis remains relatively scarce, and systematic CRB characterizations are still missing.
	\item The CRB formulations reported in the literature are typically tailored to specific signal models, system assumptions, and parameterizations. For example, all the above-mentioned analyses are intrinsically waveform-specific, and a number of results are derived under simplified setups, such as single-target scenarios \cite{AFDM_ISAC,OFDM+OTFS,cao2025agileaffinefrequencydivision,ODDM_ISAC} and/or isotropic transmission models \cite{AFDM_ISAC,OFDM+OTFS}. 
	\item Most CRB results are primarily developed in forms of semi-analytical, numerical-computation-oriented expressions (e.g., involving deriving partial derivatives \cite{OFDM+OTFS,OFDM+OTFS2,ODDM+FMCW} and/or element-wise calculation of the FIM \cite{OTFS_CRB_uplink,cao2025agileaffinefrequencydivision,AFDM_ISAC_MIMO,ODDM_ISAC,ODDM+FMCW}). These expressions are used i) for numerical evaluation purposes to assess performance of certain algorithms \cite{OFDM+OTFS2,ODDM_ISAC,AFDM_ISAC_MIMO}; and/or ii) as sensing metrics for parameter or waveform optimization \cite{OTFS_CRB_uplink,ODDM_ISAC,cao2025agileaffinefrequencydivision}. In contrast, comparatively little attention has been devoted to extracting intrinsic insights from the CRBs/FIMs themselves, as done in \cite{xia2025jointlocationvelocityestimation}, or revealing structural interpretations that elucidate the fundamental mechanisms governing DD estimation performance. 
\end{enumerate}
As a result, the lack of a general and unified estimation-theoretic framework hampers systematic comparison and benchmarking of sensing performance across different waveforms and system configurations, and, more importantly, obscures structure-revealing insights into the CRB for joint DD estimation. These gaps motivate this work.

\emph{Major Contents and Contributions:} This paper's major contents and principal contributions are summarized as follows.
\begin{itemize}
	\item A unified analytical FIM framework is established under general sensing settings, which directly quantifies the joint DD CRBs across various waveforms. Particularly, different waveforms are modeled via orthonormal bases (i.e., unitary transformations) that map transmitted information-bearing data in different domains to time-domain (more specifically, represented by slow- and fast-time) symbols, thereby admitting a unified and common representation. The FIM analysis reveals that the sensing performance is fundamentally governed by the power profile of time-frequency (TF, i.e., in slow-time and Fourier transformed fast-time) domain symbols, which is shaped by the waveform-specific modulation.
	\item The key finding reveals that different waveforms, under unitary data-to-symbol modulation, can theoretically yield same DD CRBs under the same TF symbol-level power profile (corresponding to different data sequence). Conversely, waveform sensing advantage is data-dependent. If certain waveform yield ``better'' CRB, it is because that waveform-specific modulation reallocates the power of a \emph{particular choice} of the data sequence to a TF symbol-level power profile that is better aligned with the FIM structure.
	\item The Pareto boundary of the joint DD CRB region is identified by optimizing over the power profile of transmitted TF symbols. It reveals the system's joint DD estimation-theoretic performance limits, i.e., the system’s ultimate DD CRB capability, beyond some commonly adopted CRB expressions. An efficient and low-complexity numerical algorithm is developed to trace this boundary.
	\item Closed-form FIM/CRB expressions and optimal waveform structures are derived for the special single-target case. They enable analytical tractability and asymptotic analysis, providing interpretable insights into the fundamental CRB performance on joint DD estimation.
\end{itemize}

Additionally, all theoretical findings are validated with numerical examples. To the best of the authors' knowledge, this is the first framework to quantify the joint DD CRB performance for diverse multi-carrier waveforms, and the first work to explicitly identify the estimation-theoretic limits of the achievable multi-parameter (e.g., delay and Doppler) CRB region. The proposed framework is expected to serve as a versatile building block for further studies incorporating more sophisticated waveform-specific processing.


\emph{Notational Convention:} Regular, lowercase bold, and uppercase bold symbols respectively represent a scalar, vector, and matrix (e.g., $a$, $\mathbf{a}$, and $\mathbf{A}$). $\mathbb{C}$, $\mathbb{R}$ denote the complex and real sets; their dimension is indicated in superscripts. $\jj$ is the imaginary unit such that $\jj^2=-1$. $\I$ is the identity matrix, $\mathbf{1}$ (resp. $\mathbf{0}$) is the matrix with all $1$'s (resp. $0$'s); their dimensions are indicated in subscripts. $|\cdot|$, $\|\cdot\|$ denote the absolute value of a scalar, the Euclidian norm of a vector, respectively; $(\cdot)^{-1}$, $(\cdot)^*$, $(\cdot)^\trans$, $(\cdot)^\herm$, $\Re(\cdot)/\Im(\cdot)$, $\vect\{\cdot\}$, and $\Tr(\cdot)$ represent inverse, conjugate, transpose, Hermitian transpose, the real/imaginary part, the columnwise stacked vector, and the trace of a matrix, respectively. $\Mdiag\{\cdot\}$
returns a matrix formed by putting its arguments on its main diagonal entries or blocks, depending on the context; $\vdiag\{\cdot\}$ returns a vector formed using the diagonal elements of its (square) matrix argument. $\otimes$ is the Kronecker product. Let $\mathbf{A},\mathbf{B} \in \mathbb{C}^{PQ \times PQ}$ be block-partitioned matrices composed of $P \times P$ blocks of size $Q \times Q$, i.e., $\mathbf{A} = [\mathbf{A}_{\{p_1,p_2\}}]$; $\mathbf{B} = [\mathbf{B}_{\{p_1,p_2\}}]$, where $\mathbf{A}_{\{p_1,p_2\}},\mathbf{B}_{\{p_1,p_2\}} \in \mathbb{C}^{Q \times Q}$, $1\leq p_1,p_2\leq P$. $\Bodot$ is the block Hadamard product, defined as $$\mathbf{A}\Bodot\mathbf{B} := [\mathbf{A}_{\{p_1,p_2\}}\mathbf{B}_{\{p_1,p_2\}}],$$
i.e., the $(p_1,p_2)$-th $Q\times Q$ sub-block of the matrix $(\mathbf{A}\Bodot\mathbf{B})$ is $\mathbf{A}_{\{p_1,p_2\}}\mathbf{B}_{\{p_1,p_2\}}$. Finally, $\BTr(\cdot)$ is the block trace operator, defined as
\begin{equation*}
	\BTr(\mathbf{A})\! :=\! \begin{bmatrix}
		\Tr(\mathbf{A}_{\{1,1\}})& \Tr(\mathbf{A}_{\{1,2\}}) & \cdots & \Tr(\mathbf{A}_{\{1,P\}})\\
		\Tr(\mathbf{A}_{\{2,1\}}) & \Tr(\mathbf{A}_{\{2,2\}}) & \cdots & \Tr(\mathbf{A}_{\{2,P\}}) \\
		\vdots & \vdots & \ddots & \vdots  \\
		\Tr(\mathbf{A}_{\{P,1\}}) & \Tr(\mathbf{A}_{\{P,2\}}) & \cdots & \Tr(\mathbf{A}_{\{P,P\}})
	\end{bmatrix}\!.
\end{equation*}
\section{System Model}
\label{sec:system_model}

As shown in Fig. \ref{fig:system_model}, a generic sensing system, e.g., a radar\footnote{Although the exposition adopts a radar-oriented interpretation, the proposed framework is equally applicable to device-centric sensing, where the transmitted signal $\underline{s}(t)$ serves as a known probing sequence (e.g., pilot or reference signal) for estimating DD parameters in multipath channels.} setup, is considered. It operates at the frequency $\fc$ with a total bandwidth $B$. A high-dynamic time-domain sensing waveform $\underline{s}(t)$ is used to estimate the delays and Doppler shifts related to the targets. Here, $\underline{s}\in\{\underline{s}_{\mathsf{w}}|\mathsf{w}\in\mathcal{W}\}$ with $\mathcal{W}:=\{\OFDM,\OTFS,\ODDM,\AFDM,\OCDM\}$. 

Specifically, $\underline{s}(t)$ is sent by the transmitter (Tx), reflected by $P$ targets/scatters, and received by the receiver (Rx). The sensing operation can be conducted in either monostatic or bistatic mode. For the latter case, the Tx and Rx are synchronized. As a result, the Rx has the perfect knowledge of $\underline{s}(t)$ \cite{wang2025device,OFDM+OTFS,OFDM+OTFS2,braun2014ofdm,Xiont_TIT,wang_2026_TIT} for both cases, as in standard device-free sensing settings.\footnote{As the focus is on sensing performance under different waveforms, and the signal $\underline{s}(t)$ is known for Rx signal processing, its randomness is not explicitly considered herein. When $\underline{s}(t)$ is random, the analysis remains applicable by adopting a Miller-Chang type CRB \cite{Miller1978MCRB} instead of the conventional CRB, where the estimation accuracy is expected over the signal distribution.}

\begin{figure}[t]
	\centering
	\includegraphics[scale=1]{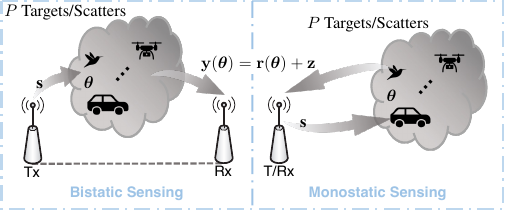}
	\caption{A notional sensing system considered in this paper.}
	\label{fig:system_model}
\end{figure}

Each scatter is regarded as a point target \cite{wang2025device,braun2014ofdm,OFDM+OTFS}, so that the sensing link between T/Rx and targets can be modeled as $P$-tap (multipath) TF selective channels:
\begin{equation}
	\gtf(t,\tau;\boldtheta) = \sum_{p=1}^{P} g_p\delta(\tau-\tau_p){\rm e}^{\jj 2\pi\nu_p t},
	\label{eqn:channel_tf}
\end{equation}
where $g_p\in\mathbb{C}$, $\nu_p\in\mathbb{R}$, and $\tau_p\in\mathbb{R}$ denote the channel gain (e.g., combined pathloss and target reflection), the Doppler shift, and the delay of the $p$-th tap, respectively; $\boldtheta$ is the collection of all unknown parameters
\begin{equation*}
	\boldtheta:=\Big[\boldtau^\trans,\boldnu^\trans,\g^\trans\Big]^\trans \in \mathbb{R}^{4P},
\end{equation*}
where $\boldtau:=[\tau_1,\cdots,\tau_P]^\trans\in\mathbb{R}^{P}$, $\boldnu:=[\nu_1,\cdots,\nu_P]^\trans\in\mathbb{R}^{P}$, and $\g:=[\gr^\trans,\gi^\trans]^\trans\in\mathbb{R}^{2P}$, with $\gr:=\Re\big\{[g_1,\cdots,g_P]^\trans\big\}$ and $\gi:=\Im\big\{[g_1,\cdots,g_P]^\trans\big\}$. Here, $\boldtau$ and $\boldnu$ are of interest; $\g$ is of nuisance. In the DD domain, the channel \eqref{eqn:channel_tf} is
\begin{equation*}
	\gdd(\nu,\tau;\boldtheta) = \mathscr{F}_{t}\Big\{\gtf(t,\tau;\boldtheta)\Big\} = \sum_{p=1}^P g_p\delta(\tau-\tau_p)\delta(\nu-\nu_p).
\end{equation*}
Here, $\mathscr{F}_t\{\cdot\}$ is the Fourier transform with respect to  $t$. 

Consider a block-transmission system where the transmitted time-domain symbol is represented by $\S\in\mathbb{C}^{M\times N}$. Here, $M$ and $N$ denote the number of samples in the fast- and slow-time domain, which determine delay and Doppler resolutions, respectively. Under such transmission, the sample duration is $\Ts:=1/B$.\footnote{In this paper, Nyquist sampling is primarily considered \cite{FanLiu_TIT,OFDM+OTFS2} to gain general insights. Faster-than-Nyquist sampling can be directly built upon this framework by setting $T_{\mathsf{s},\mathsf{Nyquist}}=1/B$ and $\Ts=T_{\mathsf{s},\mathsf{Nyquist}}/L$, where $L>1$ is the oversampling ratio.} By adding a cyclic prefix (CP) or chirp-periodic prefix (CPP)\footnote{For waveforms like AFDM, a CPP \cite{AFDM_communication} is used instead of a CP due to different signal periodicity. For either prefixing, they induce a circulant matrix structure in the effective delay channel. As a result, \eqref{eqn:delay_circular} holds in general.} with the prefix length $\Mcp$ that is no less than the maximum delay spread in the channel \cite{FanLiu_iceberg_TSP}, the time-domain symbol $\S$ becomes $\Scp\in\mathbb{C}^{(M+\Mcp)\times N}$. The transmitted continuous-time signal is thus
\begin{equation*}
	\underline{s}(t) = \sum_{\tilde{n}=0}^{N-1}\sum_{\tilde{m}=-\Mcp}^{M-1} [\Scp]_{\tilde{m}',\tilde{n}'} \psi_{\mathsf{ps}}(t-\tilde{m}\Ts-\tilde{n}\Tsym),
\end{equation*}
where $\tilde{m}'=\tilde{m}+1$, $\tilde{n}'=\tilde{n}+1$, $\Tsym := (M+\Mcp)\Ts$ is the prefix-extended duration for one slow-time sample, and $\psi_{\mathsf{ps}}(\cdot)$ represents a pulse shape with unit power. Here, $\mathsf{ps}\in\{\mathsf{rect},\mathsf{RRC},\mathsf{sinc},\mathsf{Gaussian}\}$, which respectively represent rectangular, root raised cosine (RRC), sinc, and Gaussian pulses. After passing through the the channel \eqref{eqn:channel_tf}, the received noise-free signal is
\begin{equation*}
	\underline{r}(t;\boldtheta) = \int \gtf(t,\tau;\boldtheta)s(t-\tau)\ \mathrm{d}{\tau} =  \sum_{p=1}^P g_p\underline{s}(t-\tau_p){\rm e}^{\jj 2\pi\nu_p t}.
\end{equation*}
\section{Cram\'{e}r-Rao Bound on Joint\\ Delay-Doppler Estimation}
\label{sec:CRB}


\subsection{Measurement Model}
Denote $\Rcp\in\mathbb{C}^{(M+\Mcp)\times N}$ as the received noise-free sample after the matched filter for $\psi_{\mathsf{ps}}(t)$ and sampled at time $t=m\Ts+n\Tsym$, given as
\begin{equation}
	\begin{aligned}
		&[\Rcp]_{m',n'} = \Big[\int \underline{r}(\xi;\boldtheta)\psi_{\mathsf{ps}}^*(\xi-t) \ \mathrm{d} \xi\Big]\Big|_{t=m\Ts+n\Tsym} \\
		=&\ \sum_{p=1}^{P}g_p\sum_{\tilde{n}=0}^{N-1}\sum_{\tilde{m}=-\Mcp}^{M-1} [\Scp]_{\tilde{m}',\tilde{n}'} {\rm e}^{\jj 2\pi\nu_p n\Tsym}  \\
		&\hspace{1.6cm}\times\Delta_{\mathsf{ps}}[(m-\tilde{m})\Ts+(n-\tilde{n})\Tsym-\tau_p],
	\end{aligned}
	\label{eqn:measurement_model_discrete_withCP}
\end{equation}
where $m'=m+1$, $n'=n+1$, and $\Delta_{\mathsf{ps}}(t):=\int \psi_{\mathsf{ps}}(\xi)\psi_{\mathsf{ps}}^*(\xi-t)\ \mathrm{d}{\xi}$ is the auto-correlation of transmit pulse.
After sampling, the Rx discards the first $\Mcp$ samples in each slow-time symbol:
\begin{equation*}
	[\R]_{m',n'} = [\Rcp]_{m',n'},\quad 1\leq m'\leq M,\ 1\leq n'\leq N.
\end{equation*}
Due to the presence of the CP/CPP, the inter-symbol interference can be prevented and the contribution comes from $\tilde{n}=n$ in \eqref{eqn:measurement_model_discrete_withCP} \cite{FanLiu_iceberg_TSP}. The received noise-free samples are
\begin{equation*}
	[\R]_{m',n'} \!=\! \sum_{p=1}^{P}g_p\!\!\sum_{\tilde{m}=0}^{M-1} [\S]_{\tilde{m}',n'} {\rm e}^{\jj 2\pi\nu_p n\Tsym} \Delta_{\mathsf{ps}}[(m-\tilde{m})\Ts-\tau_p].
\end{equation*}
Coherently, the measurement model can be written as
\begin{equation}
	\Y(\boldtheta) = \R + \Z = \sum_{p=1}^Pg_p\boldDelta_{\mathsf{ps}}(\tau_p)\S\V(\nu_p)+\Z,
	\label{eqn:measurement_model_intermediate}
\end{equation}
where $\boldDelta_{\mathsf{ps}}(\cdot)$ is a Toeplitz matrix, given as $[\boldDelta_{\mathsf{ps}}(\tau)]_{m_1',m_2'}=\Delta_{\mathsf{ps}}((m_1'-m_2')\Ts-\tau)$, $1\leq m'_1,m'_2\leq M$, 
\begin{equation*}
	\V(\nu_p) := {\Mdiag}\Big\{\Big[{\rm e}^{\jj 2\pi\nu_pn\Tsym}\Big]_{n=0}^{N-1}\Big\},
\end{equation*}
and $\Z$ is the sampled additive noise. When the prefix length $\Mcp$ is no smaller than the maximum delay spread in the channel, $\boldDelta_{\mathsf{ps}}(\tau_p)$ becomes a circulant matrix, which is diagonalizable by the discrete Fourier transform (DFT) matrix, i.e.,
\begin{equation}
	\boldDelta_{\mathsf{ps}}(\tau_p) =\F_M^\herm \T_{\mathsf{ps}}(\tau_p)\F_M,
	\label{eqn:delay_circular}
\end{equation}
where $\F_M$ is the DFT matrix of degree $M$, given as
\begin{equation*}
	[\F_M]_{m_1',m'_2} = \frac{1}{\sqrt{M}}{\rm e}^{-\jj \frac{2\pi}{M}(m'_1-1)(m'_2-1)}, \ \ 1\leq m'_1,m'_2\leq M,
\end{equation*}
and $\T_{\mathsf{ps}}(\tau_p)$ is a diagonal matrix, with its diagonal being the DFT of the first column of $\boldDelta_{\mathsf{ps}}(\tau_p)$. More specifically, 
\begin{equation*}
	\begin{aligned}
	& [\T_{\mathsf{ps}}(\tau_p)]_{m',m'} = 				\sum_{\tilde{m}=0}^{M-1}\Delta_{\mathsf{ps}}(\tilde{m}\Ts-\tau_p)\mathrm{e}^{-\jj\frac{2\pi m\tilde{m}}{M}} \\
		=&\ \frac1{\Ts}\sum_{i=-\infty}^\infty  \Big|\Psi_{\mathsf{ps}}\Big(\frac1{\Ts}\Big(\frac{m}{M}+i\Big)\Big)\Big|^2\mathrm{e}^{-j2\pi\frac{m+iM}{M}\frac{\tau_p}{\Ts}},
	\end{aligned}
\end{equation*}
where the second equality comes from the Poisson summation, and $\Psi_{\mathsf{ps}}(f):=\mathscr{F}_t\{\psi_{\mathsf{ps}}(t)\}$, which is explicitly evaluated as
\begin{equation*}
	\begin{aligned}
		\Psi_{\mathsf{rect}}(f) &= \sqrt{\Ts}\sinc(f\Ts)\mathrm{e}^{-j\pi f\Ts} ; \\
		\Psi_{\mathsf{RRC}}(f) &= \sqrt{\kappa_{\mathsf{RC}}(f;\beta)}\mathrm{e}^{-j\pi f\Ts};\\
		\Psi_{\mathsf{sinc}}(f) &= \sqrt{\Ts} \mathrm{rect}(f\Ts)\mathrm{e}^{-j\pi f\Ts};\\
		\Psi_{\mathsf{Gaussian}}(f) &= \sqrt{\Ts}\Big(\frac{2\pi}{\gamma}\Big)^{1/4}{\exp}\Big(-\frac{\pi^2 f^2\Ts^2}{\gamma}\Big)\mathrm{e}^{-j\pi f\Ts},\\
	\end{aligned}
\end{equation*}
where $\kappa_{\mathsf{RC}}(f;\beta)$ is a raised cosine (RC) function with the rolling parameter $\beta\geq 0$:
\begin{equation*}
	\kappa_{\mathsf{RC}}(f;\beta):= \Ts
	\begin{cases}
		1, & |f|\leq\frac{1-\beta}{2\Ts}\\
	\frac{1+\cos(\frac{\pi}{\beta}(\Ts|f|-\frac{1-\beta}{2}))}{2}, & \frac{1-\beta}{2\Ts}<|f|\leq\frac{1+\beta}{2\Ts}\\
		0, & |f|>\frac{1+\beta}{2\Ts}
	\end{cases},
\end{equation*}
and $\gamma$ is the expansion parameter for the Gaussian pulse.

\begin{figure}[t]
	\centering
	\includegraphics*[scale=1]{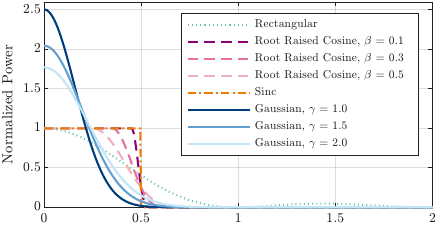}\\
	\vspace{-0.15cm}
	{\footnotesize $\quad \quad |f\Ts|$}
	\caption{$|\Psi_{\mathsf{ps}}(f)|^2/\Ts$ as a function of $|f\Ts|$ for various $\mathsf{ps}$.}
	\label{fig:Pulse_shaping}
\end{figure}

For practical pulse-shaping filters, the energy is predominantly concentrated within the main lobe satisfying $|f\Ts|\leq 1/2$ (see Fig. \ref{fig:Pulse_shaping}). Given the discrete normalized frequency index $m/M$ falls strictly within $[0, 1)$, to capture the significant energy of the pulse --- which requires $|m/M+i|\leq 1/2$ in the Poisson summation, the summation index is effectively restricted to $i=0$ for $0\leq m/M<1/2$ and to $i=-1$ for $1/2\leq m/M<1$.
For other values of $i$, the corresponding sampling points fall into the regions where the energy experiences rapid polynomial or exponential decay, yielding negligible contributions to the overall summation. Therefore, to ensure the analytical tractability of the subsequent derivations, it is well-justified to neglect these high-order spectral aliasing terms and retain only the dominant energy components (alternatively, one can keep all $i\neq 0,-1$ terms and compute numerically). Consequently,
\begin{equation*}
	{\T_{\mathsf{ps}}}(\tau_p) := {\Mdiag}\Big\{\Big[\Upsilon_{\mathsf{ps}}(m){\rm e}^{-\jj{2\pi}\eta_m\frac{\tau_p}{\Ts} }\Big]_{m=0}^{M-1}\Big\},
\end{equation*}
where $\Upsilon_{\mathsf{ps}}(m):=\frac1{\Ts}\big|\Psi_{\mathsf{ps}}(\frac1{\Ts}\eta_m)\big|^2$, and
$$\eta_m:=\begin{cases}
	m/M,& 0\leq m< \lfloor M/2\rfloor\\
	m/M-1, & \lfloor M/2\rfloor\leq m\leq M-1
\end{cases}.$$

Finally, the measurement model \eqref{eqn:measurement_model_intermediate} becomes
\begin{equation*}
	\begin{aligned}
		\Y(\boldtheta)&= \F_M^\herm\bigg[\sum_{p=1}^P g_p\T_{\mathsf{ps}}(\tau_p)\F_M\S\V(\nu_p)\bigg] + \Z.
	\end{aligned}
\end{equation*}
To provide an equivalent and more tractable characterization of the FIM in subsequent exposition, the above equation is left-multiplied with $\F_M$:
$$(\F_M\Y)= \sum_{p=1}^P g_p\T_{\mathsf{ps}}(\tau_p)(\F_M\S)\V(\nu_p) + (\F_M\Z).$$
As $\Y$, $\S$, and $\Z$ represent $2$-dimensional symbols in slow- and fast-time domains, $\F_M\Y$, $\F_M\S$, and $\F_M\Z$ represent $2$-dimensional TF domain symbols. Specifically, these TF domain observation is obtained by i) applying a DFT along the fast-time columns and then ii) stacking the transformed fast-time columns. Denoting $\y:=\vect\{\F_M\Y\}$, $\z:=\vect\{\F_M\Z\}$, and $\s:=\vect\{\F_M\S\}$ the TF-domain symbol vectors, the equivalent measurement model is
\begin{equation}
	\label{eqn:measurement_model}
	\begin{aligned}
		&\y(\boldtheta) = \bigg[\sum_{p=1}^P g_p\V(\nu_p)\otimes\T_{\mathsf{ps}}(\tau_p)\bigg]\s+\z \\
		=&\ \NU(\boldnu)\G(\g)\TAU(\boldtau)\s + \z =\r(\boldtheta)+\z,
	\end{aligned}
\end{equation}
where $\r(\boldtheta):=\NU(\boldnu)\G(\g)\TAU(\boldtau)\s$, with
\begin{equation*}
	\begin{alignedat}{2}
		\NU(\boldnu) &:= [\V(\nu_1)\otimes\I_M,\cdots,\V(\nu_P)\otimes\I_M]  &\in \mathbb{C}^{Q\times PQ} ; \\
		\TAU(\boldtau) &:= [\I_N\otimes\T_{\mathsf{ps}}(\tau_1),\cdots,\I_N\otimes\T_{\mathsf{ps}}(\tau_P)]^\trans
		&\in\mathbb{C}^{PQ\times Q}; \\
		\G(\g) &:= {\Mdiag}\big\{[g_1,g_2,\cdots,g_P]^\trans\big\}\otimes\I_Q   &\in\mathbb{C}^{PQ\times PQ}.
	\end{alignedat}
\end{equation*}
Following conventions of multi-carrier waveforms for sensing/ISAC,  the noise is modeled as additive white Gaussian noise (AWGN) \cite{FanLiu_TIT,FanLiu_iceberg_TSP,OFDM+OTFS,OFDM+OTFS2}, i.e., $\z\sim\mathcal{CN}(\bm{0}_Q,\noisevar\mathbf{I}_Q)$, where $\noisevar$ is the noise variance.

\begin{remark}
	As a consequence of the receiver front-end consisting of a pulse-matched filter followed by sampling at the chip rate $\Ts$, the discrete-time noise $\z$ is AWGN. Specifically, the output noise of the matched filter is a zero-mean Gaussian process whose auto-correlation function is determined by the auto-correlation of the receive pulse. When the matched filter satisfies the Nyquist criterion, the auto-correlation evaluated at integer multiples of $\Ts$ is zero for all nonzero lags. Consequently, the noise samples obtained by sampling at the chip rate are mutually uncorrelated, resulting in an AWGN in discrete time.\footnote{For a Gaussian pulse, the Nyquist criterion is not satisfied exactly; nevertheless, it can be approximately regarded as quasi-Nyquist when the residual intersample correlation is negligible.}
\end{remark}

\begin{remark}
	Beyond \eqref{eqn:measurement_model}, different signal processing may be further involved at the Rx side for different waveforms. As subsequent signal processing would \emph{not} create any additional (Fisher) information when $\S$ (hence, $\s$) is known at the Rx \cite{range_compression_TSP,Kay1993Fundamentals}, further waveform-specific processing are not explicitly discussed and \eqref{eqn:measurement_model} is employed for generic analysis.
\end{remark}

\begin{definition}
	Under AWGN, the FIM for estimating $\boldtheta$ is (the Slepian-Bangs formula, \cite[Theorem 2.1]{Complex_Stochastic_Processes})
	\begin{equation*}
		\begin{aligned}
			\FIM_{\boldtheta\boldtheta} :=&\  2\Re\Bigg\{ \Big(\pd{\r(\boldtheta)}{\boldtheta^\trans}\Big)^\herm (\noisevar\mathbf{I}_Q)^{-1} \Big(\pd{\r(\boldtheta)}{\boldtheta^\trans}\Big) \Bigg\} \\
			=& \frac{2}{\noisevar} \Re\left\{ \pd{\r^\herm(\boldtheta)}{\boldtheta}\pd{\r(\boldtheta)}{\boldtheta^\trans} \right\} \in\mathbb{R}^{4P\times 4P}.
		\end{aligned}
	\end{equation*}
\end{definition}

In the sequel, the structure of the FIM will be explored.

\subsection{Structure and Evaluation of the FIM and CRB}

\begin{theorem} \label{theorem:FIM_structure}
	The FIM $\FIM_{\boldtheta\boldtheta}$ can be block-partitioned as
	\begin{equation*}
	\FIM_{\boldtheta\boldtheta} \!=\!	\renewcommand{\arraystretch}{1.15}
		\frac{2}{\noisevar}\!\!
		\left[\!\!\!
		\begin{array}{rrrr}
			\Re\{\J_{\boldtau\boldtau}\}& \Re\{\J_{\boldtau\boldnu}\} & \Re\{\J_{\boldtau\g}\} & -\Im\{\J_{\boldtau\g}\}\\
			\Re\{\J^\trans_{\boldtau\boldnu}\} & \Re\{\J_{\boldnu\boldnu}\} & \Re\{\J_{\boldnu\g}\} & -\Im\{\J_{\boldnu\g}\}\\
			\Re\{\J^\trans_{\boldtau\g}\} & \Re\{\J^\trans_{\boldnu\g}\} & \Re\{\J_{\g\g}\} & -\Im\{\J_{\g\g}\} \\
			-\Im\{\J^\trans_{\boldtau\g}\} & -\Im\{\J^\trans_{\boldnu\g}\} & -\Im\{\J^\trans_{\g\g}\} &
			\Re\{\J_{\g\g}\}
		\end{array}\!\!\!
		\right]\!\!,
		\renewcommand{\arraystretch}{1}
	\end{equation*}
	where
	\begin{subequations}
		\label{eqn:J}
		\begin{align}
			\J_{\boldtau\boldtau} &:= Q\BTr\Big[\NU^\herm\NU\Bodot\G^*\dot{\TAU}^*\Rs^\trans\dot{\TAU}^\trans\G\Big]; \label{eqn:J_tau} \\
			\J_{\boldtau\boldnu} &:= Q\BTr\Big[ \NU^\herm\dot{\NU} \Bodot \G^*\dot{\TAU}^*\Rs^\trans \TAU^\trans\G \Big]; \label{eqn:J_taunu}\\
			\J_{\boldnu\boldnu} &:=  Q\BTr\Big[\dot{\NU}^\herm\dot{\NU}\Bodot\G^*\TAU^*\Rs^\trans\TAU^\trans\G\Big]; \label{eqn:J_nu} \\
			\J_{\boldtau\g} &:= Q\BTr\Big[\NU^\herm\NU\Bodot\G^*\dot{\TAU}^*\Rs^\trans\TAU^\trans\Big]; \label{eqn:J_taug} \\
			\J_{\boldnu\g} &:= Q\BTr\Big[\dot{\NU}^\herm\NU\Bodot\G^*\TAU^*\Rs^\trans\TAU^\trans\Big]; \label{eqn:J_nug} \\
			\J_{\g\g} &:=  Q\BTr\Big[\NU^\herm\NU\Bodot\TAU^*\Rs^\trans\TAU^\trans\Big], \label{eqn:J_g}
		\end{align}
	\end{subequations}
	with
	\begin{equation*}
		\begin{aligned}
			\dot{\NU} &:= \left[\dot{\V}(\nu_1)\otimes \I_M,\cdots,\dot{\V}(\nu_P)\otimes \I_M\right]; \\
			\dot{\TAU} &:= \left[\I_N\otimes \dot{\T}_{\mathsf{ps}}(\tau_1),\cdots,\I_N\otimes \dot{\T}_{\mathsf{ps}}(\tau_P)\right]^\trans;\\
			\dot{\V}(\nu_p) :=&\ \pd{\V(\nu_p)}{\nu_p} =  {\Mdiag}\Big\{\Big[ \jj2\pi n\Tsym {\rm e}^{\jj 2\pi\nu_pn\Tsym} \Big]_{n=0}^{N-1}\Big\}; \\
			\dot{\T}_{\mathsf{ps}}(\tau_p) :=&\ \pd{\T_{\mathsf{ps}}(\tau_p)}{\tau_p}\\
			=&\ {\Mdiag}\Big\{\Big[-\jj\Upsilon_{\mathsf{ps}}(m)\frac{2\pi \eta_m}{\Ts}{\rm e}^{-\jj{2\pi}\eta_m\frac{\tau_p}{\Ts}}\Big]_{m=1}^{M-1}\Big\},
		\end{aligned}
	\end{equation*}	
	and  the transmit \emph{sample} correlation matrix is defined as 
	\begin{equation*}
		\Rs := \frac1Q \s\s^\herm.
	\end{equation*}
	
	\begin{IEEEproof}
		See Appendix \ref{Appendix:FIM_structure_proof} in Supplementary Information.
	\end{IEEEproof}
\end{theorem}

Theorem \ref{theorem:FIM_structure} presents a principled way to calculate the complete FIM corresponding to all unknown parameters. For numerical computation, \eqref{eqn:J} can be further expressed in an element-wise manner in Appendix \ref{appendix:FIM_element_wise} in Supplementary Information. From the calculated FIM, the CRB can be defined.

\begin{definition} \label{definition:CRB}
The CRB matrix for estimating $\boldtheta$ is
\begin{equation*}
	\CRB_\boldtheta := \FIM_{\boldtheta\boldtheta}^{-1}.
\end{equation*}
The CRB matrices for delay and Doppler are, respectively
\begin{equation*}
	\CRB_{\boldtau} := [\CRB_{\boldtheta}]_{1:P,1:P};\quad \CRB_{\boldnu} := [\CRB_{\boldtheta}]_{P+1:2P,P+1:2P}.
\end{equation*}
For any weakly-biased or unbiased estimator of $\boldtau$, denoted as $\hat{\boldtau}$ (resp. $\hat{\boldnu}$ for $\boldnu$), the following inequalities on the estimation mean square error (MSE) hold
\begin{equation*}
	\mathbb{E}\big\{(\boldtau-\hat{\boldtau})(\boldtau-\hat{\boldtau})^\trans\big\} \succeq \CRB_{\boldtau};\ \
	\mathbb{E}\big\{(\boldnu-\hat{\boldnu})(\boldnu-\hat{\boldnu})^\trans\big\} \succeq \CRB_{\boldnu},
\end{equation*}
or equivalently,
\begin{equation} \label{eqn:CRB_tau/nu}
\begin{aligned}
	\mathbb{E}\big\{\|\boldtau-\hat{\boldtau}\|^2\big\}\geq \Tr(\CRB_{\boldtau}):=\varepsilon(\boldtau);\\
	\mathbb{E}\big\{\|\boldnu-\hat{\boldnu}\|^2\big\}\geq \Tr(\CRB_{\boldnu}):=\varepsilon(\boldnu).\\
\end{aligned}
\end{equation}
The CRBs for each delay and Doppler element are respectively $\varepsilon(\tau_p) := [\CRB_{\boldtau}]_{p,p}$ and $\varepsilon(\nu_p) := [\CRB_{\boldnu}]_{p,p}$.
\end{definition}

\begin{proposition}
	\label{proposition:dependence_diagonal}
	From element-wise expressions in Appendix \ref{appendix:FIM_element_wise} and the property of the $\Tr(\cdot)$ operator, it is obtained that
	\begin{equation*}
		\begin{aligned} 		
			[\J&_{\boldnu\boldnu}]_{p_1,p_2} = Qg_{p_1}^*g_{p_2}\times \\
			&\sum_{q'=1}^Q\big[\dot{\V}^\herm(\nu_{p_1})\dot{\V}(\nu_{p_2})\otimes\T_{\mathsf{ps}}(\tau_{p_2})\T_{\mathsf{ps}}^*(\tau_{p_1})\big]_{q',q'} [\Rs^\trans]_{q',q'},
		\end{aligned}
	\end{equation*}
	which holds as $\dot{\V}^\herm(\nu_{p_1})\dot{\V}(\nu_{p_2})\otimes\T_{\mathsf{ps}}(\tau_{p_2})\T_{\mathsf{ps}}^*(\tau_{p_1})$ is diagonal. It can be similarly shown that each element in $\J_{\boldtau\boldtau}$, $\J_{\boldtau\boldnu}$, $\J_{\boldtau\g}$, $\J_{\boldnu\g}$, $\J_{\bm{gg}}$, and hence, $\FIM_{\boldtheta\boldtheta}$, depends only on the diagonal terms of $\Rs$, i.e., the \textit{power profile} of the transmitted TF symbol, suggesting that cross sample correlations (i.e., cross products) among TF symbols do not influence the CRB on joint DD estimation. 
\end{proposition}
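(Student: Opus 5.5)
The plan is to derive the block formula for $[\J_{\boldnu\boldnu}]_{p_1,p_2}$ straight from the block Hadamard/block trace definitions in Theorem~\ref{theorem:FIM_structure}, and then to observe that every matrix sandwiching $\Rs^\trans$ is diagonal.

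\textbf{Step 1: extract the $(p_1,p_2)$ block.} By definition of $\Bodot$ and $\BTr$, entry $(p_1,p_2)$ of \eqref{eqn:J_nu} equals $Q\,\Tr$ of the product of the $(p_1,p_2)$ sub-blocks.
\begin{itemize}
\item The sub-block of $\dot{\NU}^\herm\dot{\NU}$ is $\dot{\V}^\herm(\nu_{p_1})\dot{\V}(\nu_{p_2})\otimes\I_M$.
\item The sub-block of $\G^*\TAU^*\Rs^\trans\TAU^\trans\G$ is $g_{p_1}^*g_{p_2}\,(\I_N\otimes\T_{\mathsf{ps}}^*(\tau_{p_1}))\Rs^\trans(\I_N\otimes\T_{\mathsf{ps}}(\tau_{p_2}))$. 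This uses that $\G$ is block diagonal with blocks $g_p\I_Q$, and that $\T_{\mathsf{ps}}$ is diagonal, so transposition leaves it unchanged.
\end{itemize}

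\textbf{Step 2: cycle the trace.} Using cyclicity of $\Tr(\cdot)$, I will move $\I_N\otimes\T_{\mathsf{ps}}(\tau_{p_2})$ to the front. The mixed-product rule for $\otimes$ then merges the diagonal factors into
\[
\D:=\dot{\V}^\herm(\nu_{p_1})\dot{\V}(\nu_{p_2})\otimes\T_{\mathsf{ps}}(\tau_{p_2})\T_{\mathsf{ps}}^*(\tau_{p_1}),
\]
where the two diagonal factors in the right slot commute. The entry becomes $Qg_{p_1}^*g_{p_2}\Tr(\D\Rs^\trans)$.

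\textbf{Step 3: apply diagonality.} Since $\D$ is diagonal (a Kronecker product of diagonal matrices), $\Tr(\D\Rs^\trans)=\sum_{q'}[\D]_{q',q'}[\Rs^\trans]_{q',q'}$. This gives the stated formula. Moreover, $[\Rs^\trans]_{q',q'}=|s_{q'}|^2/Q$, so only the power profile enters.

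\textbf{Step 4: repeat for the remaining blocks.} The same argument applies to \eqref{eqn:J_tau}, \eqref{eqn:J_taunu}, \eqref{eqn:J_taug}, \eqref{eqn:J_nug} and \eqref{eqn:J_g}. Replacing $\V$ or $\T_{\mathsf{ps}}$ by the derivatives $\dot{\V}$, $\dot{\T}_{\mathsf{ps}}$ preserves diagonality, and absent $\G$ factors only remove the gain scalars. In each case the matrix multiplying $\Rs^\trans$ inside the trace is diagonal. Since $\FIM_{\boldtheta\boldtheta}$ is assembled from real and imaginary parts of these blocks, it depends on $\s$ only through $\{|s_q|^2\}$.

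The main obstacle is bookkeeping rather than mathematics: getting the block indexing of $\TAU^\trans$ and $\TAU^*$ right, which have size $Q\times PQ$ and $PQ\times Q$, so that the $(p_1,p_2)$ sub-block of $\TAU^*\Rs^\trans\TAU^\trans$ is correctly identified with the $p_1$ factor conjugated.
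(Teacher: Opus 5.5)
Your proposal is correct and takes essentially the same route as the paper. Steps 1--2 reproduce the element-wise trace expressions of Appendix~\ref{appendix:FIM_element_wise}, and in fact spell out the block bookkeeping for \eqref{eqn:J} that the paper leaves implicit; Step 3 then uses that the Kronecker product of diagonal matrices multiplying $\Rs^\trans$ is diagonal, so only $[\Rs^\trans]_{q',q'}=|[\s]_{q'}|^2/Q$ survives, exactly as the statement claims. (Cosmetically, $\D$ already denotes the parameter-scaling matrix in Section~\ref{sec:Pareto}, so choose another name for your diagonal weighting matrix.)
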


To provide further engineering insights, the following theorem presents closed-form evaluations of the Fisher information and CRB matrix for the single-target scenario.

\begin{theorem}[FIM for single-target DD estimation]
	\label{theorem:single_target_general}
	For single-target scenario, i.e., $P=1$, $\tau=\tau_1$, $\nu=\nu_1$, $g=g_1$, \eqref{eqn:J} admits following evaluations (here $q':=q+1$ and $[\s]_{q'}$ is the $q'$-th element in vector $\s$)
	\begin{subequations}
		\label{eqn:J_scalar}
		\begin{align}
			\mathscr{J}_{\tau\tau} &= |g|^2\sum_{q=0}^{Q-1}\Upsilon_{\mathsf{ps}}^2(m)\Big(\frac{2\pi\eta_ m}{\Ts}\Big)^2|[\s]_{q'}|^2; \label{eqn:J_tau_scalar} \\
			\mathscr{J}_{\tau\nu} &= - |g|^2\sum_{q=0}^{Q-1}\Upsilon_{\mathsf{ps}}^2(m)(2\pi  n\Tsym)\Big(\frac{2\pi \eta_m}{\Ts}\Big)|[\s]_{q'}|^2;
			\label{eqn:J_taunu_scalar} \\
			\mathscr{J}_{\nu\nu} &= |g|^2\sum_{q=0}^{Q-1} \Upsilon_{\mathsf{ps}}^2(m)(2\pi n\Tsym)^2|[\s]_{q'}|^2; \\
			\mathscr{J}_{\tau g} &= \jj g^*\sum_{q=0}^{Q-1}\Upsilon_{\mathsf{ps}}^2(m) \Big(\frac{2\pi \eta_m}{\Ts}\Big)|[\s]_{q'}|^2; \\
			\mathscr{J}_{\nu g} &= -\jj g^*\sum_{q=0}^{Q-1} \Upsilon_{\mathsf{ps}}^2(m) (2\pi n\Tsym)|[\s]_{q'}|^2;\\
			\mathscr{J}_{gg} &= \sum_{q=0}^{Q-1} \Upsilon_{\mathsf{ps}}^2(m) |[\s]_{q'}|^2.
			\label{eqn:J_g_scalar}
		\end{align}
	\end{subequations}
	The CRB matrix for joint estimation of $\tau$ and $\nu$ satisfies
	\begin{equation}
		\label{eqn:EFIM_single_target}
		\CRB_{\{\tau,\nu\}} = \renewcommand{\arraystretch}{1.6}
		\frac\noisevar2\left[\begin{array}{cc}
			\mathscr{J}_{\tau\tau}-\frac{|\mathscr{J}_{\tau g}|^2}{|\mathscr{J}_{gg}|^2} &
			\mathscr{J}_{\tau\nu}+\frac{|\mathscr{J}_{\tau g}\mathscr{J}_{\nu g}|}{|\mathscr{J}_{gg}|^2} \\
			\mathscr{J}_{\tau\nu}+\frac{|\mathscr{J}_{\tau g}\mathscr{J}_{\nu g}|}{|\mathscr{J}_{gg}|^2} &
			\mathscr{J}_{\nu\nu}-\frac{|\mathscr{J}_{\nu g}|^2}{|\mathscr{J}_{gg}|^2}
		\end{array}\right]^{-1}.
	\end{equation}
	
	\begin{IEEEproof}
		See Appendix \ref{appendix:single_target_general} in Supplementary Information. 
	\end{IEEEproof}
\end{theorem}

\begin{remark}
	It can be observed that the FIM is contributed from all $Q$ TF samples in an additive manner. It should be further noted that, under the single-target scenario, the FIM, and hence the CRB, do not depend on the specific values of the parameters of interest, e.g., $\tau$ and $\nu$. This observation is fundamentally different from the general multi-target case, where the FIM and CRB typically depend on $\boldtau$ and $\boldnu$. 
\end{remark}

\subsection{Impacts of Waveform Structure}
In this part, this paper investigates how waveform structures influence delay-Doppler estimation performance. Denote $\X\in\mathbb{C}^{M\times N}$ as the sensing information (e.g., data or information) placed in certain domain, the mappings between time-domain symbol $\S$ and $\X$ for various waveforms are: 
\begin{itemize}[leftmargin=*]
	\item {\bf OFDM}: $\X$ is placed in the TF domain (each column of $\X$ is in frequency-domain), and $\S$ is a time-domain OFDM symbol with $M$ subcarriers and $N$ slow-time samples, e.g., $\S_\OFDM=\F_M^\herm\X$.
	\item {\bf OTFS}: $\X$ is placed in the DD domain, it is first transformed into the TF domain through the inverse symplectic finite Fourier transform, e.g., $\F_M\X\F_N^\herm$; this TF symbol will then undergo the Heisenberg transform (row-wise inverse DFT) to the time domain \cite{ISAC_OTFS_TWC}: $\S_\OTFS=\F_M^\herm\F_M\X\F_N^\herm=\X\F_N^\herm$.
	\item {\bf ODDM}: $\X$ is placed in the DD domain, it is transformed into the time domain by $N$-point inverse DFT, i.e., $\S_\ODDM=\X\F_N^\herm$ \cite{ODDM_ISAC}.\footnote{Under the general framework in this paper and \cite{FanLiu_TIT}, the discrete-time mathematical formulations of OTFS and ODDM become identical. Such a formulation allows to treat them as equivalent within the scope of considered CRB analysis, without delving into implementation-specific details such as advanced pulse shaping designs, which could differentiate them practically. 
	}
	\item {\bf AFDM}: $\X$ is placed in the affine Fourier transform domain, and is transformed into the time domain through the inverse discrete affine Fourier transform: $\S_\AFDM=\boldLambda_{c_1}^\herm\F_M^\herm\boldLambda_{c_2}^\herm\X$ \cite{AFDM_communication,FanLiu_TIT,AFDM_ISAC_MIMO}, where
	 	\begin{equation*}
	 		\boldLambda_{c}:={\Mdiag}\Big\{\Big[{\rm e}^{-\jj 2\pi cm^2}\Big]_{m=0}^{M-1}\Big\}.
	 	\end{equation*}
 	\item {\bf OCDM}: $\X$ is placed in the orthogonal chirp transform domain, and is transformed into the time domain through the discrete Fresnel transform: $\S_\OCDM= \boldXi^\herm\X$ \cite{OCDM_original}, where 
		\begin{equation*}
			[\boldXi]_{m_1',m_2'}=\frac1{\sqrt{M}}{\rm e}^{-\jj\frac\pi4}\begin{cases}
				{\rm e}^{\jj\frac{\pi}{M}(m_1'-m_2')^2} & \text{if }M\text{ is even} \\
				{\rm e}^{\jj\frac{\pi}{M}(m_1'+\frac12-m_2')^2} & \text{if }M\text{ is odd}
			\end{cases}.
		\end{equation*}
\end{itemize}
Denote $\x=\vect\{\X\}\in\mathbb{C}^Q$ as the data vector. According to $\s = \vect\{\F_M\S\}$ and the property of $\vect\{\cdot\}$ operator, the general matrix-based relationship between TF symbol $\s$ and data  $\x$ can be written as $\sw= \Uw\x$ for $\mathsf{w}\in\mathcal{W}$, with
\begin{equation*}
	\begin{aligned}
		\UOFDM &= \I_Q; \\
		\UOTFS &= \F_N^\herm\otimes \F_M; \\
		\UODDM &= \F_N^\herm\otimes \F_M;\\
		\UAFDM &= \I_N\otimes(\F_M\boldLambda_{c_1}^\herm\F_M^\herm\boldLambda_{c_2}^\herm); \\
		\UOCDM &= \I_N\otimes(\F_M\boldXi^\herm).
	\end{aligned}
\end{equation*}
Define the \emph{sample} data correlation matrix $\Rx:=\frac1Q\x\x^\herm$, it is hence related to $\Rsw$ with
\begin{equation}
	\Rsw = \Uw\Rx\Uw^\herm.
	\label{eqn:Rx-Rs}
\end{equation}
The FIM can be directly computed by plugging \eqref{eqn:Rx-Rs} into \eqref{eqn:J} for each $\mathsf{w}$ to assess the waveform-specific DD CRB for given $\Rx$. In subsequent analysis, this paper considers the transmitted waveform under a per-symbol power constraint $\pt$:
\begin{equation}
	\Tr(\Rx)=\pt={\|\x\|^2}/{Q}.
	\label{eqn:power_constraint}
\end{equation}
As $\Uw$ is unitary, e.g., $\Uw^\herm\Uw=\I_Q$, all waveforms yield the same power, i.e., $\Tr(\Rsw)=\pt={\|\s\|^2}/{Q}$.

\begin{remark}
	As the structure of FIMs (and CRBs) depends on the underlying measurement model and the power profile of transmitted TF symbols,
	following key facts are observed:
	\begin{itemize}
		\item[i)] For different waveforms (possibly with different data sequence $\xw$), as long as they share the same power profiles among TF symbols, i.e., the same $\vdiag\{\Rsw\}$, 
		their CRB performance on joint DD estimation will be identical. For example, 
		\begin{itemize}
			\item The ``optimal'' achievable CRBs are identical for different waveforms. For any optimal $\Rs^\star$ corresponding to certain CRB-optimal power profiles, different waveforms, under different modulation $\Uw$, have different optimal data correlation $\Rxw^\star=\Uw^\herm\Rs^\star\Uw$.
			\item Under isotropic transmission (uniform TF symbol-level power profile), e.g., $|[\s]_{q'}|^2={\pt}$ ($1\leq q'\leq Q$) or $\vdiag\{\Rsw\}=\frac{\pt}{Q}\mathbf{1}_Q$ \cite{OFDM+OTFS,AFDM_ISAC}, different waveforms exhibit the same CRB performance.
		\end{itemize}
		\item[ii)] For a given data sequence $\x$ (hence a \emph{given} $\Rx$), different waveforms generally lead to different TF symbol-level power profile $\vdiag\{\Rsw\}$, due to different projection characteristic $\Uw$. Consequently, their DD estimation capabilities differ. Hence, the waveform sensing advantage is data-dependent. For waveforms that happen to project this \emph{particular} $\Rx$ to be more aligned with the FIM structure, they achieve better CRB performance. After changing the data sequence $\x$ to a new realization, the advantage may disappear or even reverse.
	\end{itemize}
\end{remark}

\begin{proposition}[CRB for single-target DD estimation]
	\label{proposition:CRB_single_target}
	Under single-target scenarios, sinc pulse shaping, and isotropic transmission (e.g., $\vdiag\{\Rsw\}= \frac{\pt}Q\mathbf{1}_Q$ or each FT symbol $[\s]_{q'}$ has the same envelope of $\sqrt{\pt}$), the estimation of $\tau$ and $\nu$ is decoupled. The CRBs for $\tau$ and $\nu$ are respectively	
	\begin{align}
		\varepsilon(\tau) &= \frac{3M^2\Ts^2}{2|g|^2\pi^2 Q(M^2-1)\SNR} \simeq \frac{3\Ts^2}{2|g|^2\pi^2MN\SNR};  \label{eqn:CRB_single_target} \\
		\varepsilon(\nu) &= \frac{3}{2|g|^2\pi^2Q(N^2-1)\Tsym^2\SNR}\simeq\frac{3}{2|g|^2\pi^2MN^3\Tsym^2\SNR}, \notag
	\end{align}
	where $\SNR:=\pt/\noisevar$ is the (per-symbol) transmit SNR.

	\begin{IEEEproof}
		See Appendix \ref{appendix:CRB_single_target} in Supplementary Information.
	\end{IEEEproof}
\end{proposition}

\begin{remark}
	The expressions \eqref{eqn:CRB_single_target} provide an intuitive understanding of how key system parameters affect joint DD estimation performance. These expressions are waveform-independent under the isotropic transmission assumption, coinciding with the well-established results for OFDM in \cite{OFDM+OTFS,braun2014ofdm}, and AFDM in \cite{AFDM_ISAC}. Note that these CRBs are not the ``optimal'' ones, as the power distribution among TF symbols can be further designed (see next section).
\end{remark}

\section{Fundamental Joint Delay-Doppler Estimation-Theoretic Performance Limits}
\label{sec:Pareto}

\subsection{Problem Formulation}

By varying the TF symbol power profile $\vdiag\{\Rs\}$, the FIM, and hence, the CRBs for delay $\varepsilon(\boldtau)$ and Doppler $\varepsilon(\boldnu)$, change accordingly. Under the power constraint \eqref{eqn:power_constraint}, reducing one CRB necessarily increases the other. One core interest in academia is to investigate the optimal trade-off between the two CRBs, which is captured by the Pareto front of the achievable DD CRB region. 

In the sequel, this paper formulates and solves the related Pareto optimization problem to explore the estimation-theoretic limits. In particular, this paper focuses on finding the transmit waveform (more specifically, the TF symbol power profile $\vdiag\{\Rs\}$ or the correlation matrix $\Rxw$) to optimize the CRBs, based upon given parameter $\boldtheta$ \cite{range_compression_TSP,MIMO_radar_signal_design_TSP,TWC_Hua}.\footnote{As seen in Theorem \ref{theorem:FIM_structure}, the FIM and CRB matrices inherently depend on the true (and unknown) parameter $\boldtheta$. Practically, to assess the performance, the FIM/CRB can be evaluated at a specific parameter operating point \cite{MIMO_radar_signal_design_TSP,range_compression_TSP}, e.g., a particular DD bin of interest. The parameter values can be nominal values \cite{MIMO_radar_signal_design_TSP} or estimated from an initial probing stage \cite{range_compression_TSP}. Following common practice in theoretical studies \cite{range_compression_TSP,MIMO_radar_signal_design_TSP,TWC_Hua}, this paper does not address how these parameters are explicitly obtained, as numerous estimation techniques have already been well-discussed, e.g., in \cite{braun2014ofdm,multi-tone,OFDM+OTFS2}.} Before proceeding, it should be noted that: i) The elements of the FIM differ by several orders of magnitude, making it ill-conditioned and causing numerical difficulties; ii) The diagonal entries of the CRB matrix have different physical units, so conventional optimization objectives, e.g., minimizing the trace of the CRB matrix \cite{TWC_Hua}, are not always meaningful. To tackle these issues, define the transformed variable
\begin{equation*}
	\begin{aligned}
		\tilde{\boldtau} &:= \boldtau/\Tsym; & \tilde{\boldnu} &:= \boldnu\times \Tsym; \\
		\tilde{\g} &:= \g; & \tilde{\boldtheta} &:= \Big[\tilde{\boldtau}^\trans,\tilde{\boldnu}^\trans,\tilde{\g}^\trans\Big]^\trans.
	\end{aligned}
\end{equation*}
Thus, $\boldtheta$ and $\tilde{\boldtheta}$ relate with each other according to $\tilde\boldtheta = \D{\boldtheta}$, with $\D:=\Mdiag\big\{\frac1\Tsym\I_P,\Tsym\I_P,\I_{2P}\big\}$. Under such transformation, each element in $\tilde{\boldtheta}$ has the same unit and comparable magnitude. It is straightforward to verify that \cite[Chapter 3]{Kay1993Fundamentals}
\begin{equation*}
	\FIM_{\tilde{\boldtheta}\tilde{\boldtheta}} = \D^{-1}\FIM_{\boldtheta\boldtheta}\D^{-1};\quad \CRB_{\tilde{\boldtheta}} = \D\CRB_{\boldtheta} \D.
\end{equation*}
The Pareto front of the joint DD CRB region is characterized by optimizing a weighted sum of the delay and Doppler CRBs:
\begin{equation}
	\begin{aligned}
		\min_{\Rxw}& & & \alpha \Tr(\CRB_{\tilde\boldtau}) + (1-\alpha)\Tr(\CRB_{\tilde\boldnu}) \\
		\st& & &\Tr(\Rxw)=\pt, \\
		&\ & & \Rxw = \Rxw^\herm \succeq \mathbf{0}, \\
		&\ & & \Rank(\Rxw) = 1,
	\end{aligned}
	\label{eqn:CRB_optimization_original}
\end{equation}
for each $\mathsf{w}\in\mathcal{W}$, where $\alpha\in[0,1]$ is the relative weight. By solving the \eqref{eqn:CRB_optimization_original} for each $\alpha\in[0,1]$, it traces the complete Pareto-optimal boundary. 

\begin{remark}
	The above optimization problem investigates the fundamental limits of DD estimation under finite power budgets. In ISAC systems, it corresponds to the sensing-optimal operating point, which prioritizes the maximization of sensing performance without imposing any communication constraints \cite{Xiont_TIT,wang_2026_TIT}. For practical ISAC implementations, waveform design is further bounded by communication-related constraints including peak-to-average power ratio, modulation order, and bit error rate. These coupled design requirements remain open topics for future investigation.
\end{remark}

Problem \eqref{eqn:CRB_optimization_original} is a high-dimensional rank-constrained optimization, and is inherently challenging to solve. In this regard, the following proposition is proposed to equivalently transform the original optimization into a more tractable form.

\begin{proposition} 
	\label{Proposition:optimization_transform}
	The original problem \eqref{eqn:CRB_optimization_original} is equivalent to
	\begin{equation}
		\label{eqn:CRB_optimization_reduced}
		\begin{aligned}
			\min_{\boldrho}& & &\Tr[\A(\alpha)\FIM_{\boldtheta\boldtheta}^{-1}(\boldrho)]\\
			\st& & & \mathbf{1}_Q^\trans\boldrho=\pt,\ \boldrho\succeq \mathbf{0}_Q,
		\end{aligned}
	\end{equation}
	where $\A(\alpha):={\Mdiag}\big\{\frac\alpha{\Tsym^2}\I_P,(1-\alpha)\Tsym^2\I_P,\mathbf{0}_{2P}\big\}$, $\boldrho\in\mathbb{R}^Q$, and $\FIM_{\boldtheta\boldtheta}(\boldrho)$ is the 
	FIM of $\boldtheta$ in Theorem \ref{theorem:FIM_structure}, as a function of $\boldrho=[\varrho_1,\cdots,\varrho_Q]^\trans$, written as
	\begin{equation}
		\label{eqn:FIM_decompose}
		\FIM_{\boldtheta\boldtheta}(\boldrho)=\sum_{q'=1}^{Q} \varrho_{q'} \H_{q'} \quad {\text {with}} \quad \H_{q'}:=
		\FIM_{\boldtheta\boldtheta}\bigg|_{\Rs=\mathbf{e}_{q'}\mathbf{e}_{q'}^\trans}.
	\end{equation}
	
	\begin{IEEEproof}
		See Appendix \ref{appendix:optimization_transform} in Supplementary Information.
	\end{IEEEproof}
\end{proposition}


\begin{remark}
	With proposition \ref{Proposition:optimization_transform}, the dimensionality of the optimization variables is greatly reduced, and the optimization problem \eqref{eqn:CRB_optimization_reduced} is convex. Consequently, the computational complexity for CRB-minimization is significantly reduced. 
\end{remark}

Once the Pareto-optimal solution, denoted as $\boldrho^\star(\alpha)=[\varrho_1^\star(\alpha),\cdots,\varrho_Q^\star(\alpha)]^\trans$, to \eqref{eqn:CRB_optimization_reduced} is obtained, the optimal transmit TF-domain symbol is constructed as 
$$\s^{\star}(\alpha) = \sqrt{Q} \Big[\sqrt{\varrho_1^\star(\alpha)}\mathrm{e}^{\jj\varphi_1},\cdots,\sqrt{\varrho_Q^\star(\alpha)}\mathrm{e}^{\jj\varphi_Q}\Big]^\trans,$$
where $\varphi_1,\cdots,\varphi_Q\in\mathbb{R}$ are arbitrary. The optimal TF symbol correlation matrix $\Rs^\star(\alpha)$ is thus $\Rs^\star=\frac1Q\s^\star(\alpha)[\s^\star(\alpha)]^\herm$. This shows that the CRB-optimal sequence has some free degrees of freedom, i.e., the phase terms $\{\varphi_{q'}\}_{q'=1}^Q$, that can be arbitrarily chosen without sacrificing any optimality in terms of the CRB. In particular, this naturally suggests that (Q)PSK symbols with optimal power allocation can achieve non-trivial communication rate while being DD estimation-optimal, e.g., for ISAC applications. Furthermore, the waveform-specific CRB-optimal data sequence has the following general form of $\xw^\star(\alpha) = \Uw^\herm\s^{\star}(\alpha)$. 

Given $\Rs^\star(\alpha)$, the CRBs for delay and Doppler can be calculated from evaluating the FIM by replacing $\Rs$ with $\Rs^\star(\alpha)$ and invoking the FIM-CRB relationship in Definition \ref{definition:CRB}. Denote the CRBs for $\boldtau$ and $\boldnu$ in \eqref{eqn:CRB_tau/nu} under the optimal TF symbol power profile $\boldrho^\star(\alpha)$ as $\varepsilon(\boldtau,\alpha)$ and $\varepsilon(\boldnu,\alpha)$. By sweeping $\alpha$ from $0$ to $1$, the Pareto front of the joint range-Doppler CRB region can be readily obtained as
\begin{equation*}
	\Big\{\big(\varepsilon(\boldtau,\alpha),\varepsilon(\boldnu,\alpha)\big)\Big|\alpha\in[0,1]\Big\}.
\end{equation*}
In the sequel, a generic approach to obtain the numerical solution is provided.

\subsection{Identifying the Delay-Doppler CRB Pareto Boundary}

\begin{algorithm}[t]
\caption{}
\label{alg}
\begin{algorithmic}[1]
\Require FIM decomposition $\{\mathbf H_{q'}\}_{q'=1}^Q$, weighting matrix $\A(\alpha)$, total power $\pt$
\Ensure Optimal power $\boldrho^\star(\alpha)=[\varrho_1^{[k]}(\alpha),\cdots,\varrho_{Q}^{[k]}(\alpha)]^\trans$

\State Initialize $\varrho_{q'}^{[0]}(\alpha) = \pt / Q$, $\forall q'\in\{1,\cdots,Q\}$
\State Set iteration index $k=0$

\Repeat
    \State Compute $\FIM_{\boldtheta\boldtheta}^{[k]}:= \FIM_{\boldtheta\boldtheta}(\boldrho^{[k]}(\alpha)) = \sum_{q'=1}^Q \varrho_{q'}^{[k]}(\alpha) \mathbf H_{q'}$
    \For{$q' = 1,\ldots,Q$}
        \State Compute marginal utility         
        \[
        \kappa_{q'}^{[k]}(\alpha) =
        \Tr\!\left[
        \big(\FIM_{\boldtheta\boldtheta}^{[k]}\big)^{-1}
        \A(\alpha)
        \big(\FIM_{\boldtheta\boldtheta}^{[k]}\big)^{-1}
        \mathbf H_{q'}
        \right]
        \]
    
    \State Update power allocation
    \[
    \varrho_{q'}^{[k+1]}(\alpha)  = \max\left\{0,
    \varrho_{q'}^{[k]}(\alpha)  + \vartheta \kappa_{q'}^{[k]}(\alpha)\right\}
    \]
    	\textcolor{blue!40}{\Comment{\textit{$\vartheta$ is the pre-set step size for parameter update}}}
    \EndFor
    \State Normalize $\sum_{q'=1}^Q \varrho_{q'}^{[k+1]}(\alpha)  \gets \pt$
    \State $k \gets k+1$
\Until{convergence}
\end{algorithmic}
\end{algorithm}

Since $\FIM_{\boldtheta\boldtheta}(\boldrho)$ is affine in $\boldrho$ and $\Tr[\A(\alpha)\FIM_{\boldtheta\boldtheta}^{-1}(\boldrho)]$ is convex over the cone of positive definite matrices, problem \eqref{eqn:CRB_optimization_reduced} is a convex optimization problem. Constructing the Lagrangian as
\begin{equation*}
	\mathcal{L}(\alpha)=\Tr[\A(\alpha)\FIM_{\boldtheta\boldtheta}^{-1}(\boldrho)]+\lambda\Big(\sum_{q'}\varrho_{q'}-\pt\Big)-\sum_{q'}\mu_{q'}\varrho_{q'},
\end{equation*}
where $\lambda$ and $\{\mu_{q'}\}_{q'\in\{1,\cdots,Q\}}$ are associated Lagrange multipliers. The first-order stationary condition is
\begin{equation*}
	\pd{}{\varrho_{q'}} \mathcal{L}(\alpha) = -\Tr[\FIM_{\boldtheta\boldtheta}^{-1}(\boldrho)\A(\alpha)\FIM_{\boldtheta\boldtheta}^{-1}(\boldrho)\H_{q'}]+\lambda-\mu_{q'}=0.
\end{equation*}
By defining $\kappa_{q'}(\alpha) := \Tr[\FIM_{\boldtheta\boldtheta}^{-1}(\boldrho)\A(\alpha)\FIM_{\boldtheta\boldtheta}^{-1}(\boldrho)\H_{q'}]$ for simplicity and invoking the complementary slackness condition $\mu_{q'} \varrho_{q'}=0$, the stationarity condition yields the following Karush-Kuhn-Tucker (KKT) condition
\begin{equation*}
	\begin{cases}
		\kappa_{q'}(\alpha) = \lambda & \text{ for } \varrho_{q'}^\star(\alpha) > 0,\\
		\kappa_{q'}(\alpha) \leq \lambda & \text{ for } \varrho_{q'}^\star(\alpha) = 0,
	\end{cases}
\end{equation*}
where $\varrho_{q'}^\star(\alpha)$ is the optimal solution of $\varrho_{q'}$ under $\A(\alpha)$. Since problem \eqref{eqn:CRB_optimization_reduced} is convex, any stationary point satisfying the above KKT condition is globally optimal. By exploiting the idea of gradient descent, a numerical solution is given in Algorithm \ref{alg}, which converges to the global optimal solution for a proper choice of the step size $\vartheta$.

\begin{remark}[Complexity Analysis] From \eqref{eqn:J}, each $\H_{q'}$ requires $\mathcal{O}\!\left(6P^2\times Q\right)$ complexity; the overall complexity in computing $\{\H_{q'}\}_{q'=1}^Q$ is therefore $\mathcal{O}(Q^2 P^2)$. At each iteration in Algorithm \ref{alg}, the FIM $\FIM_{\boldtheta\boldtheta}^{[k]} = \sum_{q'} \varrho_{q'}^{[k]}(\alpha) \mathbf H_{q'}$ is constructed with $\mathcal O(Q P^2)$ complexity, followed by matrix inversion with $\mathcal O(P^3)$ complexity. The marginal utilities $\kappa_{q'}^{[k]}(\alpha)$ require $\mathcal O(P^3)$ per $q'$, resulting in $\mathcal O(Q P^3)$ complexity for $Q$ evaluations. The non-negativity update of $\varrho^{[k+1]}_{q'}(\alpha)$ has $\mathcal O(Q)$ complexity. 
Thus, the overall per-iteration complexity is $\mathcal O(Q P^3)$. The total complexity is $\mathcal O(K_{\mathsf{iter}}Q P^3+Q^2P^2)$, where $K_{\mathsf{iter}}$ is the number of iterations. 
\end{remark}

\begin{table}[t]
	\centering
	\caption{Default Parameter Setting}
	\label{table:parameter}
	\begin{tabular}{lcl}
		\toprule
		Parameter & Symbol & Value/Range \\
		\midrule
		\multicolumn{3}{l}{
			{\bf IEEE 802.11p} \cite{OFDM+OTFS,Delay_Doppler_processing_numerical1,Delay_Doppler_processing_numerical2}} \\
		Bandwidth & $B$ & \SI{10}{MHz}  \\
		Center Frequency & $\fc$ & $\SI{5.89}{GHz}$ \\
		Fast-Time Samples & $M$ & $64$ \\
		Prefix (CP/CPP) Length & $\Mcp$ & $M/4$ \\
		Slow-Time Samples & $N$ & $50$ \\
		\midrule
		Number of Targets & $P$ & $4$ \cite{wang2025device,OTFS_CRB_uplink,OFDM+OTFS2,entropy_OTFS} \\
		Transmit SNR & $\SNR={\pt}/{\noisevar}$ & \SI{10}{dB} \\
		\bottomrule
	\end{tabular}
\end{table}
\section{Numerical Examples}
\label{sec:numerical}

\begin{figure}[t]
	\centering
	\includegraphics[scale=1]{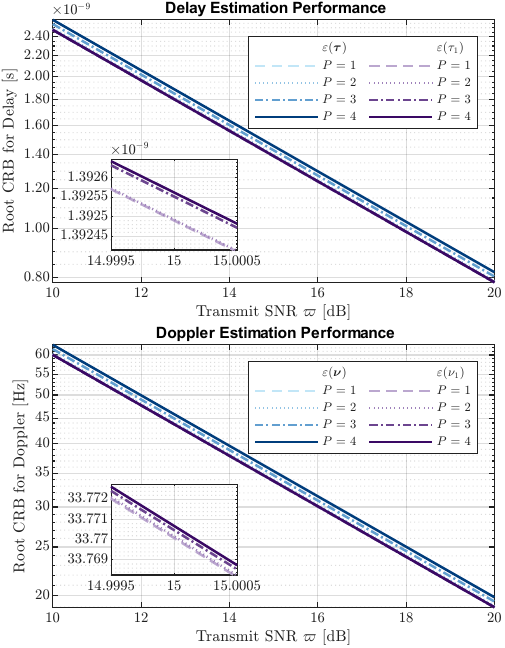}
	\caption{Delay-Doppler CRBs (under isotropic transmission $|[\s]_{q'}|=\sqrt{\pt}$) vs. transmit SNR $\SNR$ for various the number of targets $P$.}
	\label{fig:CRB_SNR}
\end{figure}

\begin{figure}[t]
\centering
	\includegraphics[scale=1]{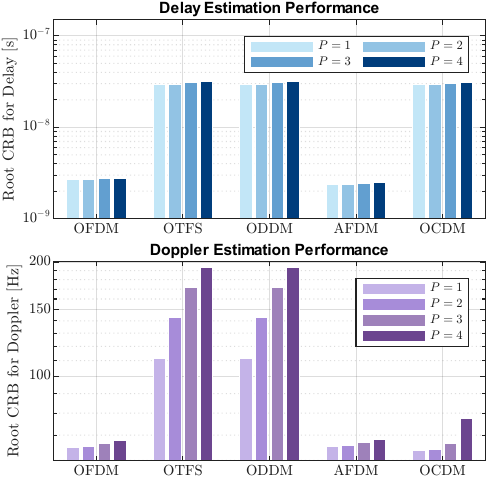}
	\caption{Joint delay-Doppler CRBs for different waveforms, under the given particular realization of data sequence \eqref{eqn:data_correlation_example} as an example.}
	\label{fig:CRB_data_correlation}
\end{figure}

\subsection{Setups}
The general setting of this part is as follows. The waveform-related parameters are based on IEEE 802.11p, which has not only been standardized for dedicated short-range communications \cite{Delay_Doppler_processing_numerical2} but also demonstrated potentials for radar sensing \cite{Delay_Doppler_processing_numerical1,OFDM+OTFS}. For the channel with $P=4$, the delays, Dopplers, and channel coefficients are set as \cite{entropy_OTFS}
\begin{equation*}
	\begin{aligned}
		\boldtau &= [0, 8, 4, 6]^\trans \Tsym/M; \\
		\boldnu &= [4.82,-3.23, 1.38,-2.47]^\trans /(N\Tsym); \\
		\g &= [-2-9\jj, 40 + 73\jj, 3 + 45\jj, 15-43\jj]^\trans/100.
	\end{aligned}
\end{equation*}
Additionally, the rolling parameter for RRC pulse is $\beta=0.1$ \cite{ODDM_overall}; the expansion parameter for the Gaussian pulse is $\gamma=2$; the chirp parameters for AFDM are set as $c_1=9/512$, $c_2=0$ \cite{AFDM_ISAC_MIMO}. Unless otherwise specified, all default parameters are summarized in TABLE \ref{table:parameter}. All results herein are based on the derived theoretical findings and obtained using MATLAB.

\subsection{Results}
In Fig. \ref{fig:CRB_SNR}, the effects of transmit SNR and the number of sensing targets on DD CRBs are portrayed. In this example, RRC pulse is utilized, and isotropic transmission (uniform power distribution on TF symbols) is considered, so that the DD CRBs are the same for all waveforms. It can be seen that root CRBs on both delay and Doppler in $\log$-scale decrease linearly with respect to SNR in $\rm{dB}$, as expected. Moreover, increasing the number of targets degrades both per-target accuracy $\varepsilon(\nu_1)$, $\varepsilon(\tau_1)$ and overall estimation accuracy $\varepsilon(\boldnu)$, $\varepsilon(\boldtau)$ (i.e., increases the corresponding CRBs). This degradation stems from an expanded parameter-under-estimation space, which introduces more unknowns and intensifies inter-target interference in the observation model, ultimately reducing the Fisher information per parameter. 

To examine the effects of waveform modulation on DD CRBs, the following data sequence, as an example with the RRC pulse, is considered for all waveforms
\begin{equation}
	\begin{aligned}
		\x \leftarrow [1,\cdots&,N]^\trans \otimes \mathbf{1}_{M\times1}
		+\mathbf{1}_{N\times1}\otimes [1,\cdots,M]^\trans,
	\end{aligned}
	\label{eqn:data_correlation_example}
\end{equation}
which is then normalized according to the constraint \eqref{eqn:power_constraint}. Based on \eqref{eqn:data_correlation_example}, the TF-domain symbol correlation matrix $\Rsw$ can be calculated for each $\mathsf{w}\in\mathcal{W}$, and corresponding DD CRBs are further obtained in Fig. \ref{fig:CRB_data_correlation}. It is crucial to note that the presented results do \emph{not} imply that certain waveforms are fundamentally superior to others for estimation. Instead, they show that, for the considered particular data \eqref{eqn:data_correlation_example}, the modulation matrix $\Uw$ determines how well the resulting TF symbol-level power distribution aligns with the structure of the FIM, which in turn governs the estimation performance. Also note that OTFS and ODDM yield identical CRBs here, as $\U_{\OTFS}=\U_{\ODDM}$ in the considered setup. In a practical system, the CRB difference, if any, between these two waveforms would stem from other implementation-specific aspects, not from the modulation $\Uw$ itself.

\begin{figure}[t]
	\centering
	\includegraphics{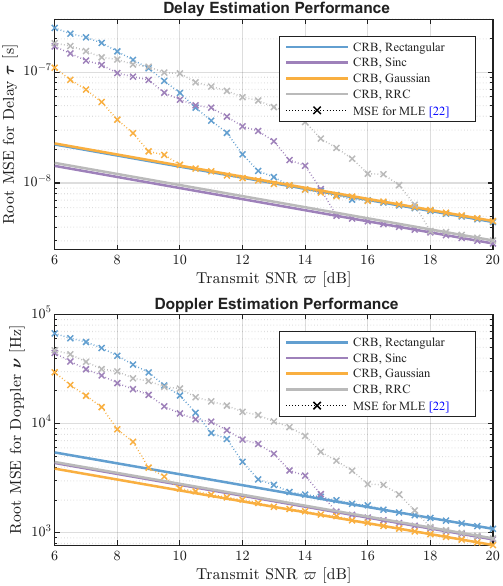}
	\caption{CRB and MSE performance for joint delay-Doppler estimation for various pulse shape $\mathsf{ps}$.}
	\label{fig:MLE}
\end{figure}

In Fig. \ref{fig:MLE}, the CRB and MSE for DD estimation versus transmit SNR are presented. Specifically, the MSE curves are obtained by directly applying the technique developed in \cite[Section IV]{OFDM+OTFS2}. It iteratively alternates optimizing multipath DD parameters and updating channel gains via low-complexity two-stage search. By neglecting cross terms in the log-likelihood loss assuming sparse multipath channels, it achieves \emph{approximately} maximum likelihood estimation (MLE) \cite{OFDM+OTFS2}. In this specific example, $M=N=16$ and isotropic power allocation is considered. It is observed that different pulse shapes yield different DD CRBs, and their respective threshold SNRs also differ. For all pulse shapes, it is remarkable to note that the derived CRB is able to accurately predict the estimator performance in moderate-to-high SNR regime.

\begin{figure}[t]
	\centering
	\includegraphics{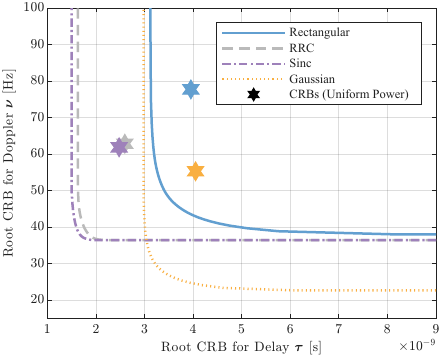}
	\caption{The Pareto-optimal and uniform-power delay-Doppler CRBs for various pulse shape $\mathsf{ps}$. Both the Pareto-optimal and isotropic CRBs are identical across various waveforms.}
	\label{fig:Pareto}
\end{figure}

In Fig. \ref{fig:Pareto}, the joint DD CRB region is characterized, wherein both the CRB Pareto front and the DD CRBs under isotropic transmission are plotted for various pulse shapes. For Algorithm \ref{alg}, it is regarded as convergent when the relative change of CRBs in the last iteration are within $ 10^{-6}$, and the relative change of gradient satisfies
$$\left|{\sqrt{\textstyle\sum_{q'}\left[\kappa_{q'}^{[k]}(\alpha)\right]^2}}\Bigg/{\sqrt{\textstyle\sum_{q'}\left[\kappa_{q'}^{[k-1]}(\alpha)\right]^2}}-1\right|\leq 10^{-4}.$$
It is observed that for all pulse shapes, the Pareto front exhibits a distinctly convex shape, with the DD CRBs from isotropic transmission lying considerably far from this optimal boundary. This clearly reveals that the system's ultimate estimation-theoretic capability can far exceed the CRBs under isotropic transmission --- a scenario that is implicitly adopted as the default sensing benchmark in many existing studies \cite{OFDM+OTFS,braun2014ofdm}. 

Following the results in Fig. \ref{fig:Pareto}, the optimal TF symbol power profile $\boldrho^\star(\alpha)$ (normalized) is plotted in Fig. \ref{fig:Pareto_signal} for various $\alpha$ and pulse shapes. These sparse patterns reveal an intuitive structure tied to estimation preference. For $\alpha=0$ (minimizing the Doppler CRB), the power is allocated to TF samples at the extremes of the slow-time index; for $\alpha=1$ (minimizing the delay CRB), power is instead concentrated at some of the range-frequency index (i.e., the discrete Fourier transformed fast-time index) that are jointly determined by the index $|\eta_m|$ and weighting $\Upsilon_{\mathsf{ps}}(m)$. 
As $\alpha$ sweeps from $0$ to $1$, the optimal pattern is observed to transit smoothly in between. The power distribution behavior is physically justified, as Doppler estimation couples primarily with slow-time variations, whereas delay estimation couples with fast-time sampling. When the pulse shaping spectrum is flat, i.e., $\mathsf{ps}=\mathsf{sinc}$, power is distributed around $m=M/2$, where $|\eta_m|$ is maximized. For other pulses $\mathsf{ps}\in\{\mathsf{rect},\mathsf{RRC},\mathsf{Gaussian}\}$, their pulse shaping functions $\Upsilon_{\mathsf{ps}}(m)$, which is minimized around $m=M/2$, effectively suppresses the Fisher information contribution from TF samples around index $m=M/2$ and thus shifts the optimal power distribution away from $m=M/2$.

\begin{figure*}[t]
	\centering
	\includegraphics{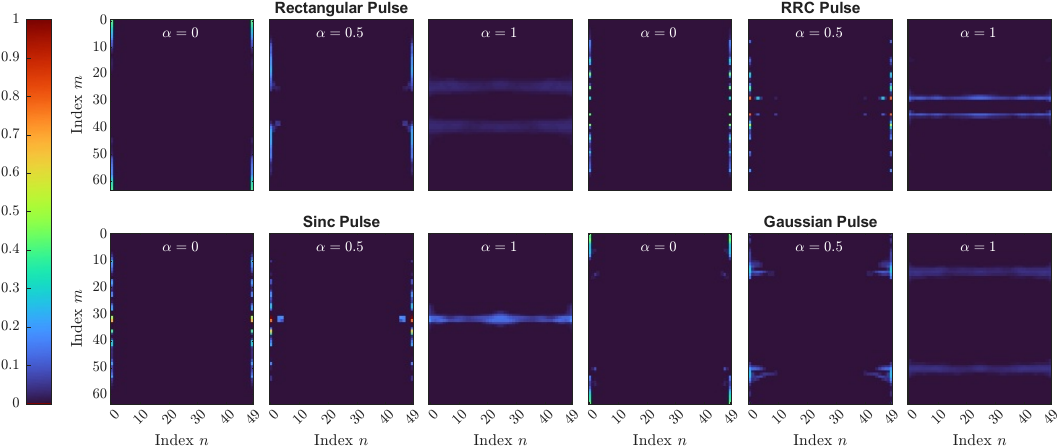}
	\caption{The normalized Pareto-optimal TF-domain symbol-level power profile, i.e., normalized $\boldrho^\star(\alpha)$, with various DD preference $\alpha$ and pulse shape $\mathsf{ps}$.}
	\label{fig:Pareto_signal}
\end{figure*}

\section{Conclusion}

In this paper, a systematic and unified Fisher information framework was established to quantify CRB performance on joint DD estimation using various waveforms. Within this framework, the effects of channel dynamics, waveform-specific data-to-signal modulation, the number of targets, etc., on the performance of joint DD estimation were examined. The Pareto front on the DD CRB region was identified by formulating and solving a family of power allocation problems. The resulting front serves as the system's ultimate DD estimation-centric capability. Based upon these analytical results, this paper serves as a building block to accommodate more waveform-specific characteristics for future study, and hopefully sheds light on the design of sensing/ISAC-oriented advanced waveforms in next-generation networks.
\appendices
\numberwithin{equation}{section}

\section{Proof of Theorem \ref{theorem:FIM_structure}}
\label{Appendix:FIM_structure_proof}

\emph{Proof:} The overall FIM $\FIM_{\boldtheta\boldtheta}$ can be block-partitioned as
\begin{equation*}
	\label{eqn:FIM_partition}
	\renewcommand{\arraystretch}{1.1}
	\FIM_{\boldtheta\boldtheta} = \begin{bmatrix}
		\FIM_{\boldtau\boldtau} & \FIM_{\boldtau\boldnu} & \FIM_{\boldtau\gr} & \FIM_{\boldtau\gi} \\
		\FIM^\trans_{\boldtau\boldnu} & \FIM_{\boldnu\boldnu} & \FIM_{\boldnu\gr} & \FIM_{\boldnu\gi} \\
		\FIM^\trans_{\boldtau\gr} & \FIM^\trans_{\boldnu\gr} & \FIM_{\gr\gr} & \FIM_{\gr\gi}  \\
		\FIM^\trans_{\boldtau\gi} & \FIM^\trans_{\boldnu\gi} & \FIM^\trans_{\gr\gi} & \FIM_{\gi\gi}
	\end{bmatrix}.
\end{equation*}
In the sequel, each $P\times P$ sub-block is calculated. Before proceeding, the following intermediate derivatives are calculated
\begin{equation*}
	\begin{aligned}
		\pd{\r(\boldtheta)}{\tau_p} &= \NU\G (\e_p\e_p^\trans\otimes\I_Q)\dot{\TAU}
		\s; \\
		\pd{\r(\boldtheta)}{\nu_p} &= \dot{\NU}(\e_p\e_p^\trans \otimes \I_Q)\G\TAU\s;\\
		\pd{\r(\boldtheta)}{\Re\{g_p\}} &= \NU(\e_p\e_p^\trans\otimes\I_Q)\TAU\s; \\
		\pd{\r(\boldtheta)}{\Im\{g_p\}} &= \jj\NU(\e_p\e_p^\trans\otimes\I_Q)\TAU\s=\jj \pd{\r(\boldtheta)}{\Re\{g_p\}},
	\end{aligned}
\end{equation*}
where  $\e_p$ denotes the $p$-th standard orthonormal basis, i.e., the $p$-th column of $\I_P$.

\begin{figure*}[hb]
	\rule{\linewidth}{0.2pt}
	\begin{align}
		& \pd{\r^\herm(\boldtheta)}{\nu_{p_1}}\pd{\r(\boldtheta)}{\nu_{p_2}} =\s^\herm\TAU^\herm\G^* (\e_{p_1}\e_{p_1}^\trans \otimes \I_Q)\dot{\NU}^\herm\dot{\NU}(\e_{p_2}\e_{p_2}^\trans \otimes \I_Q)	\G\TAU\s \label{eqn:J_nu_calculation}
		\\
		&= \Tr\Big[\!\!\! \underbrace{ (\e_{p_1}^\trans\otimes\I_Q)\dot{\NU}^\herm\dot{\NU}(\e_{p_2}\otimes\I_Q)}_{\text{The }(p_1,p_2)\text{-th } Q\times Q \text{ sub-block of }\dot{\NU}^\herm\dot{\NU}}	\!\!\!\!
		\overbrace{(\e_{p_2}^\trans\otimes\I_Q)\G\TAU\s\s^\herm\TAU^\herm\G^*(\e_{p_1}\otimes \I_Q)}^{\text{The }(p_1,p_2)\text{-th } Q\times Q \text{ sub-block of }Q\G^*\TAU^*\Rs^\trans\TAU^\trans\G}\!\!\!\Big]
		= Q\Tr\Big[ \Big(\dot{\NU}^\herm\dot{\NU}\Bodot\G^*\TAU^*\Rs^\trans\TAU^\trans\G\Big)_{\{p_1,p_2\}} \Big]\notag
	\end{align}
\end{figure*}

\vspace{0.3cm}
\noindent \emph{i) Calculation of FIM blocks $\FIM_{\boldnu\boldnu}$, $\FIM_{\boldnu\boldtau}$, and $\FIM_{\boldtau\boldtau}$}

From the Slepian-Bangs formula (see \cite[Theorem 2.1]{Complex_Stochastic_Processes}), the $(p_1,p_2)$-th element in $\FIM_{\boldnu\boldnu}$ is
\begin{equation*}
	[\FIM_{\boldnu\boldnu}]_{p_1,p_2} = \frac{2}{\noisevar}\Re\Big\{ \pd{\r^\herm(\boldtheta)}{\nu_{p_1}}\pd{\r(\boldtheta)}{\nu_{p_2}} \Big\},
\end{equation*}
where the quantity inside $\Re\{\cdot\}$ is given in \eqref{eqn:J_nu_calculation}. It can be further observed that \eqref{eqn:J_nu_calculation} is the trace of the $(p_1,p_2)$-th sub-block of $\dot{\NU}^\herm\dot{\NU}\Bodot\G^*\TAU^*\Rs^\trans\TAU^\trans\G^\trans$. Hence, $\FIM_{\boldnu\boldnu}$ reduces to
\begin{equation*}
	\frac{2}{\noisevar}\Re\Big\{\underbrace{Q\BTr\Big[\dot{\NU}^\herm\dot{\NU} \Bodot \G^*\TAU^*\Rs^\trans \TAU^\trans\G\Big]}_{:=\J_{\boldnu\boldnu}}\Big\},
\end{equation*}
yielding $\J_{\boldnu\boldnu}$ in \eqref{eqn:J_nu}. Similarly, the FIM $\FIM_{\boldtau\boldnu}$ is related to $\J_{\boldtau\boldnu}$, with its $(p_1,p_2)$-th entry given as
\begin{equation*}
	\begin{aligned}
		&\ Q\Tr\Big[ \Big(\G^*\NU^\herm\dot{\NU} \Bodot \dot{\TAU}^*\Rs^\trans \TAU^\trans\G\Big)_{\{p_1,p_2\}} \Big] \\
		=&\ Q\Tr\Big[ (\G^*\NU^\herm\dot{\NU})_{\{p_1,p_2\}}(\dot{\TAU}^*\Rs^\trans \TAU^\trans\G)_{\{p_1,p_2\}}\Big]\\
		=&\ Q\Tr\Bigg[ \Big(\sum_{k=1}^P(\G^*)_{\{p_1,k\}}(\NU^\herm\dot{\NU})_{\{k,p_2\}}\Big)\\
		&\hspace{2.5cm} \Big(\sum_{\ell=1}^P(\dot{\TAU}^*\Rs^\trans \TAU^\trans)_{\{p_1,\ell\}}\G_{\{\ell,p_2\}}\Big)\Bigg]\\
		=&\ Q\Tr\Big[ (\G^*)_{\{p_1,p_1\}}(\NU^\herm\dot{\NU})_{\{p_1,p_2\}}(\dot{\TAU}^*\Rs^\trans \TAU^\trans)_{\{p_1,p_2\}}\G_{\{p_2,p_2\}}\Big]\\
		=&\ Q\Tr\Big[ (\NU^\herm\dot{\NU})_{\{p_1,p_2\}}(\G^*)_{\{p_1,p_1\}}(\dot{\TAU}^*\Rs^\trans \TAU^\trans)_{\{p_1,p_2\}}\G_{\{p_2,p_2\}}\Big]\\
		=&\ Q\Tr\Big[ (\NU^\herm\dot{\NU})_{\{p_1,p_2\}}(\G^*\dot{\TAU}^*\Rs^\trans \TAU^\trans\G)_{\{p_1,p_2\}}\Big]\\
		=&\ Q\Tr\Big[ \Big(\NU^\herm\dot{\NU} \Bodot \G^*\dot{\TAU}^*\Rs^\trans \TAU^\trans\G\Big)_{\{p_1,p_2\}} \Big],
	\end{aligned}
\end{equation*}
where the fact has been used that $\G$ is diagonal with $\G_{\{p,p\}}=g_p\I_Q$ and $\G_{\{p,k\}}=\mathbf{0}_{Q\times Q}$ for $p\neq k$; the FIM $\FIM_{\boldtau\boldtau}$ is related to $\J_{\boldtau\boldtau}$, with its $(p_1,p_2)$-th entry given as
\begin{equation*}
	\begin{aligned}
		[\J_{\boldtau\boldtau}]_{p_1,p_2} &= Q\Tr\Big[\Big(\G^*\NU^\herm\NU\G \Bodot\dot{\TAU}^*\Rs^\trans\dot{\TAU}^\trans\Big)_{\{p_1,p_2\}}\Big]\\
		&= Q\Tr\Big[\Big(\NU^\herm\NU\Bodot\G^*\dot{\TAU}^*\Rs^\trans\dot{\TAU}^\trans\G\Big)_{\{p_1,p_2\}}\Big].\\
	\end{aligned}
\end{equation*}
Hence, \eqref{eqn:J_tau} and \eqref{eqn:J_taunu} can be obtained.

\vspace{0.3cm}
\noindent \emph{ii) Calculation of FIM blocks $\FIM_{\gr\gr}$, $\FIM_{\gr\gi}$, and $\FIM_{\gi\gi}$}

\vspace{0.1cm}
Similarly to case \emph{i)}, $\FIM_{\gr\gr}:=\frac2{\noisevar}\Re\{\J_{\bm{gg}}\}$, where
\begin{equation*}
	\begin{aligned}
		[\J_{\g\g}]_{p_1,p_2} :=& \pd{\r^\herm(\boldtheta)}{\Re\{g_{p_1}\}}\pd{\r(\boldtheta)}{\Re\{g_{p_2}\}} \\
		=&\ \s^\herm \TAU^\herm (\e_{p_1}\e_{p_1}^\trans\otimes\I_Q) \NU^\herm \NU(\e_{p_2}\e_{p_2}^\trans\otimes\I_Q)\TAU\s \\
		=&\ Q\Tr\Big[\Big(\NU^\herm\NU\Bodot\TAU^*\Rs^\trans\TAU^\trans\Big)_{\{p_1,p_2\}}\Big].
	\end{aligned}
\end{equation*}
Collectively, $\J_{\g\g}$ can be written as \eqref{eqn:J_g}. Observe that
\begin{equation*}
	\begin{aligned}
		[\FIM_{\gi\gi}]_{p_1,p_2} &= \frac{2}{\noisevar}\Re\Big\{(-\jj)\pd{\r^\herm(\boldtheta)}{\Re\{g_{p_1}\}}(\jj)\pd{\r(\boldtheta)}{\Re\{g_{p_2}\}}\Big\} \\
		&= \frac{2}{\noisevar}\Re\Big\{\pd{\r^\herm(\boldtheta)}{\Re\{g_{p_1}\}}\pd{\r(\boldtheta)}{\Re\{g_{p_2}\}}\Big\};\\
		[\FIM_{\gr\gi}]_{p_1,p_2} &= \frac{2}{\noisevar}\Re\Big\{\pd{\r^\herm(\boldtheta)}{\Re\{g_{p_1}\}}(\jj)\pd{\r(\boldtheta)}{\Re\{g_{p_2}\}}\Big\} \\
		&= -\frac{2}{\noisevar}\Im\Big\{\pd{\r^\herm(\boldtheta)}{\Re\{g_{p_1}\}}\pd{\r(\boldtheta)}{\Re\{g_{p_2}\}}\Big\},\\
	\end{aligned}
\end{equation*}
which shows that $\FIM_{\gi\gi}$ and $\FIM_{\gr\gi}$ are only related to $\J_{\bm{gg}}$, i.e., $\FIM_{\gi\gi}=\frac2\noisevar\Re\{\J_{\bm{gg}}\}$; $\FIM_{\gr\gi}=-\frac2\noisevar\Im\{\J_{\bm{gg}}\}$.

\vspace{0.3cm}
\noindent \emph{iii) Calculation of FIM blocks $\FIM_{\boldnu\gr}$, $\FIM_{\boldnu\gi}$, $\FIM_{\boldtau\gi}$, and $\FIM_{\boldtau\gi}$}

Similarly to case \emph{i)}, $\FIM_{\boldnu\gr}:=\frac2{\noisevar}\Re\{\J_{\boldnu\g}\}$, where
\begin{equation*}
	\begin{aligned}
		[\J_{\boldnu\g}]_{p_1,p_2} :=&\ \pd{\r^\herm(\boldtheta)}{\nu_{p_1}}\pd{\r(\boldtheta)}{\Re\{g_{p_2}\}} \\
		=&\ \s^\herm \TAU^\herm \G^* (\e_{p_1}\e_{p_1}^\trans\otimes\I_Q) \dot{\NU}^\herm \NU(\e_{p_2}\e_{p_2}^\trans\otimes\I_Q)\TAU\s \\
		=&\ Q\Tr\Big[\Big(\dot{\NU}^\herm\NU\Bodot\G^*\TAU^*\Rs^\trans\TAU^\trans\Big)_{\{p_1,p_2\}}\Big],
	\end{aligned}
\end{equation*}
hence, $\J_{\boldnu\g}$ is given in \eqref{eqn:J_nug};
\begin{equation*}
	\begin{aligned}
		\FIM_{\boldnu\gi}&=\frac{2}{\noisevar}\Re\Big\{\pd{\r^\herm(\boldtheta)}{\boldnu}\pd{\r(\boldtheta)}{\gi^\trans}\Big\} =\frac{2}{\noisevar}\Re\Big\{(\jj)\pd{\r^\herm(\boldtheta)}{\boldnu}\pd{\r(\boldtheta)}{\gr^\trans}\Big\}\\
		&= -\frac2{\noisevar} \Im\{\J_{\boldnu\g}\}.
	\end{aligned}
\end{equation*}
Likewise, $\FIM_{\boldtau\gr}$ and $\FIM_{\boldtau\gi}$ can be calculated in a similar way, and \eqref{eqn:J_taug} can be obtained.

Finally, the proof is completed. \qed

\section{}
\label{appendix:FIM_element_wise}

Expanding the expressions in \eqref{eqn:J} using the definitions of $\NU$, $\G$, $\TAU$ and the property of the $\BTr(\cdot)$ operator, the sub-blocks of the FIM can be expressed in an element-wise manner, given in \eqref{eqn:J_element_wise}, where $1\leq p_1,p_2\leq P$. 

\begin{figure*}[!t]
	\begin{equation} \label{eqn:J_element_wise}
		\begin{aligned}
			[\J_{\boldtau\boldtau}]_{p_1,p_2} &= Qg_{p_1}^*g_{p_2}\Tr\big[\V^\herm(\nu_{p_1})\V(\nu_{p_2})\otimes\dot{\T}_{\mathsf{ps}}(\tau_{p_2})\dot{\T}_{\mathsf{ps}}^*(\tau_{p_1})] \Rs^\trans\big]\\
			[\J_{\boldtau\boldnu}]_{p_1,p_2} &= Qg_{p_1}^*g_{p_2}\Tr\big[\V^\herm(\nu_{p_1})\dot{\V}(\nu_{p_2})\otimes\T_{\mathsf{ps}}(\tau_{p_2})\dot{\T}_{\mathsf{ps}}^*(\tau_{p_1})] \Rs^\trans\big]\\
			[\J_{\boldnu\boldnu}]_{p_1,p_2} &=  Qg_{p_1}^*g_{p_2}\Tr\big[\dot{\V}^\herm(\nu_{p_1})\dot{\V}(\nu_{p_2})\otimes\T_{\mathsf{ps}}(\tau_{p_2})\T_{\mathsf{ps}}^*(\tau_{p_1})] \Rs^\trans\big]\\			
			[\J_{\boldtau\g}]_{p_1,p_2} &= Qg_{p_1}^*\Tr\big[\V^\herm(\nu_{p_1})\V(\nu_{p_2})\otimes\T_{\mathsf{ps}}(\tau_{p_2})\dot{\T}_{\mathsf{ps}}^*(\tau_{p_1})] \Rs^\trans\big] \\
			[\J_{\boldnu\g}]_{p_1,p_2} &= Qg_{p_1}^*\Tr\big[\dot{\V}^\herm(\nu_{p_1})\V(\nu_{p_2})\otimes\T_{\mathsf{ps}}(\tau_{p_2})\T_{\mathsf{ps}}^*(\tau_{p_1})] \Rs^\trans\big] \\
			[\J_{\g\g}]_{p_1,p_2} &= Q\Tr\big[\V^\herm(\nu_{p_1})\V(\nu_{p_2})\otimes\T_{\mathsf{ps}}(\tau_{p_2})\T_{\mathsf{ps}}^*(\tau_{p_1}) \Rs^\trans\big] \\
		\end{aligned}
	\end{equation}
	\rule{\linewidth}{0.2pt}
\end{figure*}

\section{Proof of Theorem \ref{theorem:single_target_general}}
\label{appendix:single_target_general}

\emph{Proof:} For the single-target case, each FIM sub-block in \eqref{eqn:J} is a scalar. 
Taking $\mathscr{J}_{\tau\tau}$ in \eqref{eqn:J_element_wise} as an example (here, $p_1=p_2=1$):
\begin{align*}
		\mathscr{J}_{\tau\tau} &= Qg^*g
		\Tr\Big[(\V^\herm\V\otimes\dot{\T}_{\mathsf{ps}}\dot{\T}_{\mathsf{ps}}^*) \Rs^\trans\Big] \\
		&=  Q|g|^2\Tr\Big[ \Rs(\I_N\otimes\dot{\T}_{\mathsf{ps}}^*\dot{\T}_{\mathsf{ps}})\Big].
\end{align*}	
Denote $\Breve{\T}_{\mathsf{ps}}=\I_N\otimes \dot{\T}_{\mathsf{ps}}^*\dot{\T}_{\mathsf{ps}}$, which is diagonal. It holds that
\begin{equation*}
	\begin{aligned}
		\Tr[\Rs\Breve{\T}_{\mathsf{ps}}] &= \sum_{i=1}^Q [\Rs\Breve{\T}_{\mathsf{ps}}]_{i,i} = \sum_{i=1}^Q \sum_{j=1}^Q [\Rs]_{i,j}[\Breve{\T}_{\mathsf{ps}}]_{j,i} \\
		&= \sum_{i=1}^Q[\Rs]_{i,i}[\Breve{\T}_{\mathsf{ps}}]_{i,i}+ \sum_{i=1}^Q \sum_{\substack{j=1\\j\neq i}}^Q [\Rs]_{i,j}\underbrace{[\Breve{\T}_{\mathsf{ps}}]_{j,i}}_{=0}\\
		&= \frac1Q \sum_{i=1}^Q [\Breve{\T}_{\mathsf{ps}}]_{i,i} |[\s]_i|^2.
	\end{aligned}
\end{equation*}
Hence, $\mathscr{J}_{\tau\tau}$ can be further written as \eqref{eqn:J_tau_scalar}. Likewise,
\begin{equation*}
	\begin{aligned}
		\mathscr{J}_{\tau\nu} &= Q|g|^2\Tr\Big[\Rs (\V^*\dot{\V}\otimes\dot{\T}_{\mathsf{ps}}^*\T_{\mathsf{ps}})\Big]; \\
		\mathscr{J}_{\nu\nu} &= Q|g|^2\Tr\Big[\Rs (\dot{\V}^*\dot{\V}\otimes\T_{\mathsf{ps}}^*\T_{\mathsf{ps}}) \Big]; \\
		\mathscr{J}_{\tau g} &= Qg^*\Tr\Big[\Rs (\I_N\otimes\dot{\T}_{\mathsf{ps}}^*\T_{\mathsf{ps}}) \Big]; \\
		\mathscr{J}_{\nu g} &= Qg^*\Tr\Big[\Rs (\dot{\V}^*\V\otimes\T_{\mathsf{ps}}^*\T_{\mathsf{ps}}) \Big]; \\
		\mathscr{J}_{gg} &= Q\Tr\Big[\Rs(\I_N\otimes\T_{\mathsf{ps}}^*\T_{\mathsf{ps}}) \Big], \\
	\end{aligned}
\end{equation*}
yielding \eqref{eqn:J_taunu_scalar}-\eqref{eqn:J_g_scalar} respectively. From the block structure in Theorem \ref{theorem:FIM_structure}, the equivalent FIM (see details in \cite{wang2025device}) for $\tau$ and $\nu$ satisfies
\begin{equation*}
	\begin{aligned}
		\frac\noisevar2\CRB_{\{\tau,\nu\}}^{-1} &= \begin{bmatrix}
			\mathscr{J}_{\tau\tau} & \mathscr{J}_{\tau\nu}\\
			\mathscr{J}_{\tau\nu} & \mathscr{J}_{\nu\nu}
		\end{bmatrix} -
		\begin{bmatrix}
			\Re\{\mathscr{J}_{\tau g}\} & -\Im\{\mathscr{J}_{\tau g}\} \\
			\Re\{\mathscr{J}_{\nu g}\} & -\Im\{\mathscr{J}_{\nu g}\}
		\end{bmatrix}\times\\
		&\begin{bmatrix}
			\mathscr{J}_{gg} & 0 \\
			0 & \mathscr{J}_{gg}
		\end{bmatrix}^{-1}
		\begin{bmatrix}
			\Re\{\mathscr{J}_{\tau g}\} & -\Im\{\mathscr{J}_{\tau g}\} \\
			\Re\{\mathscr{J}_{\nu g}\} & -\Im\{\mathscr{J}_{\nu g}\}
		\end{bmatrix}^\trans\\
	\end{aligned}
\end{equation*}
After certain matrix algebras, \eqref{eqn:EFIM_single_target} is obtained. \qed

\section{Proof of Proposition \ref{proposition:CRB_single_target}}
\label{appendix:CRB_single_target}

\emph{Proof:} For the ideal sinc pulse shaping, $ \Upsilon_{\mathsf{sinc}}(m)=1$. Under isotropic transmission, $|[\s]_{q'}|^2=\pt$ for $1\leq q'\leq Q$. From \eqref{eqn:J_tau_scalar}, it can be calculated that
\begin{equation*}
	\begin{aligned}
		\mathscr{J}_{\tau\tau}&=\pt|g|^2\sum_{q=0}^{Q-1}\Big(\frac{2\pi \eta_m}{\Ts}\Big)^2= \pt|g|^2\frac{4\pi^2N}{\Ts^2} \sum_{m=0}^{M-1}\eta_m^2.\\
	\end{aligned}
\end{equation*}
Similarly,
\begin{equation*}
	\begin{aligned}
		\mathscr{J}_{\tau\nu} &= -\pt|g|^2  \frac{2\pi^2\Tsym N(N-1)}{\Ts}\sum_{m=0}^{M-1}\eta_m;\\
		\mathscr{J}_{\nu\nu} &= \pt|g|^2 \frac{2\pi^2Q(N-1)(2N-1)\Tsym^2}{3};\\
		\mathscr{J}_{\tau g} &= \jj \pt g^* \frac{2\pi N}{\Ts}\sum_{m=0}^{M-1}\eta_m; \\
		\mathscr{J}_{\nu g} &= -\jj \pt g^* \pi Q(N-1)\Tsym; \\
		\mathscr{J}_{gg} &= Q\pt.
	\end{aligned}
\end{equation*}

From \eqref{eqn:EFIM_single_target}, it can be further obtained that
\begin{align*}
	& \mathscr{J}_{\tau\tau}-\frac{|\mathscr{J}_{\tau g}|^2}{\|\s\|^2} = \pt|g|^2\frac{4\pi^2N}{\Ts^2}\Big[\sum_{m=0}^{M-1}\eta_m^2-\frac1M\Big(\sum_{m=0}^{M-1}\eta_m\Big)^2\Big];\\
	& \mathscr{J}_{\tau\nu}+\frac{|\mathscr{J}_{\tau g}\mathscr{J}_{\nu g}|}{\|\s\|^2}
	=0;\\
	& \mathscr{J}_{\nu\nu}-\frac{|\mathscr{J}_{\nu g}|^2}{\|\s\|^2} =\pt|g|^2\frac{\pi^2Q(N^2-1)\Tsym^2}{3}.
\end{align*}

The off-diagonal elements in the equivalent FIM are zero, suggesting that estimation of $\tau$ and $\nu$ is actually decoupled. It is further calculated that
\begin{equation*}
	\begin{aligned}
		\sum_{m=0}^{M-1}\eta_m&=\begin{cases}
			-1/2,& M\text{ is even}\\
			-1,&M\text{ is odd}
		\end{cases};\\
		\sum_{m=0}^{M-1}\eta_m^2&=\begin{cases}
			\frac{M^2+2}{12M},& M\text{ is even}\\
			\frac{M^2+11}{12M},&M\text{ is odd}
		\end{cases}.\\
	\end{aligned}
\end{equation*}
Therefore, $\sum_{m=0}^{M-1}\eta_m^2-\frac1M\big(\sum_{m=0}^{M-1}\eta_m\big)^2= \frac{M^2-1}{12M}$ for any value of $M$. The corresponding CRB is therefore evaluated as $	\varepsilon(u) = \big[\frac2\noisevar\big(	\mathscr{J}_{uu}-\frac{|\mathscr{J}_{ug}|^2}{\|\s\|^2}\big)\big]^{-1}$, $u\in\{\tau,\nu\}$, yielding \eqref{eqn:CRB_single_target}. \qed

\section{Proof of Proposition \ref{Proposition:optimization_transform}}
\label{appendix:optimization_transform}

\emph{Proof:} The original optimization objective in \eqref{eqn:CRB_optimization_original} admits the following equivalent transformation
\begin{equation*}
	\begin{aligned}
		&\ \alpha \Tr(\CRB_{\tilde\boldtau}) + (1-\alpha)\Tr(\CRB_{\tilde\boldnu}) \\
		=&\ \Tr(\Mdiag\{\alpha\I_P,(1-\alpha)\I_P,\mathbf{0}_{2P}\}\CRB_{\tilde\boldtheta}) \\
		=&\ \Tr(\Mdiag\{\alpha\I_P,(1-\alpha)\I_P,\mathbf{0}_{2P}\}\D\CRB_{\boldtheta}\D) \\
		=&\ \Tr(\underbrace{\D\Mdiag\{\alpha\I_P,(1-\alpha)\I_P,\mathbf{0}_{2P}\}\D}_{:=\A(\alpha)}\CRB_{\boldtheta})\\
		=&\ \Tr[\A(\alpha)\FIM_{\boldtheta\boldtheta}^{-1}].
	\end{aligned}
\end{equation*}
From Proposition \ref{proposition:dependence_diagonal}, the FIM $\FIM_{\boldtheta\boldtheta}$ depends upon the diagonal terms of $\Rs$. Denote $\boldrho=[\varrho_1,\varrho_2,\cdots,\varrho_Q]^\trans$ as the collection of diagonal terms of $\Rs$, we may temporarily neglect the rank-$1$ constraint (this is because, as long as the diagonal terms of $\Rs$ are identical, the resulting FIM unchanges regardless of its rank) and seek the solution to \eqref{eqn:CRB_optimization_original} taking the form of
	\begin{equation*}
		\Rs = {\Mdiag}\{\boldrho\} = \sum_{q'=1}^Q \e_{q'}\e_{q'}^\trans \varrho_{q'}.
	\end{equation*}
	From the Slepian-Bangs formula,
	\begin{equation*}
		\begin{aligned}
			&[\FIM_{\boldtheta\boldtheta}]_{i,j} = \frac{2}{\noisevar} \Re\left\{ \pd{\r^\herm(\boldtheta)}{[\boldtheta]_i}\pd{\r(\boldtheta)}{[\boldtheta]_j} \right\} \\
			=&\ \frac{2Q}{\noisevar} \Re\left\{\Tr\left\{\Rs \pd{(\NU\G\TAU)^\herm}{[\boldtheta]_i}\pd{(\NU\G\TAU)}{[\boldtheta]_j} \right\} \right\}\\
			=&\ \sum_{q'=1}^Q \varrho_{q'} \underbrace{\frac{2Q}{\noisevar} \Re\left\{\Tr\left[\e_{q'}\e_{q'}^\trans \pd{(\NU\G\TAU)^\herm}{[\boldtheta]_i}\pd{(\NU\G\TAU)}{[\boldtheta]_j} \right] \right\}}_{[\FIM_{\boldtheta\boldtheta}]_{i,j}\text{ evaluated at }\Rs=\e_{q'}\e_{q'}^\trans}.
		\end{aligned}
	\end{equation*}
	Hence, the overall $\FIM_{\boldtheta\boldtheta}$ can be decomposed as \eqref{eqn:FIM_decompose}; the original optimization problem \eqref{eqn:CRB_optimization_original} can be transformed into \eqref{eqn:CRB_optimization_reduced}. \qed

\bibliographystyle{IEEEtran}
\balance
\bibliography{ref}

@ARTICLE{wang_2026_TIT,
	author={Wang, Zi-Jie and Wang, Xudong and Caire, Giuseppe},
	journal={IEEE Trans. Inf. Theory}, 
	title={On {P}areto-Optimal Estimation-Information Performance Limits of {MIMO} Integrated Sensing and Communications Systems}, 
	year={2026},
	note={{early access, DOI: 10.1109/TIT.2026.3717854}}}

@ARTICLE{OFDM_AMT_TWC,
	author={Wang, Shuhan and Tang, Aimin and Wang, Xudong and Qu, Wenze},
	journal={IEEE Trans. Wireless Commun.}, 
	title={Sensing-Assisted Channel Estimation for Bistatic {OFDM ISAC} Systems: Framework, Algorithm, and Analysis}, 
	year={2026},
	volume={25},
	number={},
	pages={7842-7857},
	doi={10.1109/TWC.2025.3633948}}

@ARTICLE{TVT_BA,
	author={Wang, Zi-Jie and Tang, Aimin and Wang, Xudong},
	journal={IEEE Trans. Veh. Technol.}, 
	title={Theoretical Performance Study for {mmWave/THz ISAC} Systems with Sensing-Based Beam Alignment}, 
	year={2026},
	volume={75},
	number={8},
	pages={17562-17575},
	doi={10.1109/TVT.2026.3679354}}

@ARTICLE{Multicarrier_survey,
	author={Zhang, Xingyao and Yin, Haoran and Tang, Yanqun and Ge, Yao and Zeng, Yong and Wen, Miaowen and Liu, Zilong and Guan, Yong Liang and Arslan, Hüseyin and Caire, Giuseppe},
	journal={IEEE Commun. Surveys Tuts.}, 
	title={A Unified Multicarrier Waveform Framework for Next-Generation Wireless Networks: Principles, Performance, and Challenges}, 
	year={2026},
	volume={28},
	number={},
	pages={5416-5455},
	doi={10.1109/COMST.2026.3672602}}

@ARTICLE{Delay_Doppler_processing_numerical2,
  author={Kenney, John B.},
  journal={Proc. IEEE}, 
  title={Dedicated Short-Range Communications {(DSRC)} Standards in the{ United States}}, 
  year={2011},
  volume={99},
  number={7},
  pages={1162-1182},
  doi={10.1109/JPROC.2011.2132790}}

@ARTICLE{ODDM+FMCW,
	author={Huang, Kehan and Shafie, Akram and Qiu, Min and Aboutanios, Elias and Yuan, Jinhong},
	journal={IEEE Trans. Wireless Commun.}, 
	title={A Novel {ISAC} Waveform Based on Orthogonal Delay-{Doppler} Division Multiplexing With {FMCW}}, 
	year={2026},
	volume={25},
	number={},
	pages={11688-11705},
	doi={10.1109/TWC.2026.3661286}}

@INPROCEEDINGS{Delay_Doppler_processing_numerical1,
  author={Nguyen, Duy H. N. and Heath, Robert W.},
  booktitle={IEEE ICASSP}, 
  title={Delay and {Doppler} processing for multi-target detection with {IEEE} 802.11 {OFDM} signaling}, 
  year={2017},
  volume={},
  number={},
  pages={3414-3418},
  doi={10.1109/ICASSP.2017.7952790}}

@ARTICLE{FanLiu_iceberg_TSP,
  author={Liu, Fan and Xiong, Yifeng and Lu, Shihang and Li, Shuangyang and Yuan, Weijie and Masouros, Christos and Jin, Shi and Caire, Giuseppe},
  journal={IEEE Trans. Signal Process.}, 
  title={Uncovering the Iceberg in the Sea: Fundamentals of Pulse Shaping and Modulation Design for Random {ISAC} Signals}, 
  year={2025},
  volume={73},
  number={},
  pages={2511-2526},
  doi={10.1109/TSP.2025.3580596}}

@ARTICLE{DD_waveform,
  author={Yuan, Weijie and Zhou, Lin and Dehkordi, Saeid K. and Li, Shuangyang and Fan, Pingzhi and Caire, Giuseppe and Poor, H. Vincent},
  journal={IEEE Wireless Commun.}, 
  title={From {OTFS} to {DD-ISAC}: Integrating Sensing and Communications in the Delay {Doppler} Domain}, 
  year={2024},
  volume={31},
  number={6},
  pages={152-160},
  doi={10.1109/MWC.018.2300607}}

@ARTICLE{Xiont_TIT,
  author={Xiong, Yifeng and Liu, Fan and Cui, Yuanhao and Yuan, Weijie and Han, Tony Xiao and Caire, Giuseppe},
  journal={IEEE Trans. Inf. Theory}, 
  title={On the Fundamental Tradeoff of Integrated Sensing and Communications Under {Gaussian} Channels}, 
  year={2023},
  volume={69},
  number={9},
  pages={5723-5751},
  doi={10.1109/TIT.2023.3284449}}

@ARTICLE{Overview_waveform_OTFS,
  author={Wei, Zhiqiang and Yuan, Weijie and Li, Shuangyang and Yuan, Jinhong and Bharatula, Ganesh and Hadani, Ronny and Hanzo, Lajos},
  journal={IEEE Wireless Commun.}, 
  title={Orthogonal Time-Frequency Space Modulation: A Promising Next-Generation Waveform}, 
  year={2021},
  volume={28},
  number={4},
  pages={136-144},
  doi={10.1109/MWC.001.2000408}}

@ARTICLE{Overview_waveform_AFDM,
	author={Rou, Hyeon Seok and Ranasinghe, Kuranage Roche Rayan and Savaux, Vincent and Abreu, Giuseppe Thadeu Freitas de and David González, G. and Masouros, Christos},
	journal={IEEE Commun. Stand. Mag.}, 
 	title={Affine Frequency Division Multiplexing ({AFDM}) for {6G}: Properties, Features, and Challenges}, 
	year={2026},
	volume={10},
	number={2},
	pages={216-225},
	doi={10.1109/MCOMSTD.2025.3643183}}

@ARTICLE{AFDM_communication,
  author={Bemani, Ali and Ksairi, Nassar and Kountouris, Marios},
  journal={IEEE Trans. Wireless Commun.}, 
  title={Affine Frequency Division Multiplexing for Next Generation Wireless Communications}, 
  year={2023},
  volume={22},
  number={11},
  pages={8214-8229},
  doi={10.1109/TWC.2023.3260906}}

@ARTICLE{OTFS_ISAC,
  author={Yuan, Weijie and Wei, Zhiqiang and Li, Shuangyang and Yuan, Jinhong and Ng, Derrick Wing Kwan},
  journal={IEEE J. Sel. Topics Signal Process.}, 
  title={Integrated Sensing and Communication-Assisted Orthogonal Time Frequency Space Transmission for Vehicular Networks}, 
  year={2021},
  volume={15},
  number={6},
  pages={1515-1528},
  doi={10.1109/JSTSP.2021.3117404}}

@INPROCEEDINGS{OTFS_CRB_uplink,
  author={Nie, Mingcheng and Li, Shuangyang and Mishra, Deepak},
  booktitle={IEEE ICC Workshops}, 
  title={Improving Channel Estimation Performance for Uplink {OTFS} Transmissions: Pilot Design based on A Posteriori {Cram\'{e}r-Rao} Bound}, 
  year={2023},
  pages={301-306}}

@ARTICLE{OFDM+OTFS2,
  author={Gaudio, Lorenzo and Kobayashi, Mari and Caire, Giuseppe and Colavolpe, Giulio},
  journal={IEEE Trans. Wireless Commun.}, 
  title={On the Effectiveness of {OTFS} for Joint Radar Parameter Estimation and Communication}, 
  year={2020},
  volume={19},
  number={9},
  pages={5951-5965},
  doi={10.1109/TWC.2020.2998583}}

@ARTICLE{OTFS_ISAC_MIMO,
  author={Keskin, Musa Furkan and Marcus, Carina and Eriksson, Olof and Alvarado, Alex and Widmer, Joerg and Wymeersch, Henk},
  journal={IEEE Trans. Wireless Commun.}, 
  title={Integrated Sensing and Communications With {MIMO-OTFS}: {ISI/ICI} Exploitation and Delay-{Doppler} Multiplexing}, 
  year={2024},
  volume={23},
  number={8},
  pages={10229-10246},
  doi={10.1109/TWC.2024.3370501}}

@article{entropy_OTFS,
  title={Cross-Domain {OTFS} Detection via Delay-{Doppler} Decoupling: Reduced-Complexity Design and Performance Analysis},
  author={Liu, Mengmeng and Li, Shuangyang and Bai, Baoming and Caire, Giuseppe},
  journal={Entropy},
  volume={27},
  number={10},
  pages={1062},
  year={2025},
  publisher={MDPI}
}

@INPROCEEDINGS{AFDM_ISAC,
  author={Bao, Hongjie and Zhuang, Hongcheng and Wang, Zhaocheng and Pang, Gaokun},
  booktitle={IEEE PIMRC}, 
  title={Performance Trade-off between Communication and Sensing Based on {AFDM} Parameter Adjustment}, 
  year={2024},
  volume={},
  number={},
  pages={1-6},
  doi={10.1109/PIMRC59610.2024.10817290}}

@misc{cao2025agileaffinefrequencydivision,
      title={Agile Affine Frequency Division Multiplexing}, 
      author={Yewen Cao and Yulin Shao},
      year={2025},
      eprint={2512.14424},
      archivePrefix={arXiv},
      primaryClass={cs.IT},
      url={https://arxiv.org/abs/2512.14424}, 
}

@ARTICLE{AFDM_ISAC_MIMO,
  author={Luo, Yirui and Guan, Yong Liang and Ge, Yao and González G, David and Yuen, Chau},
  journal={IEEE Internet Things J.}, 
  title={A Novel Angle-Delay-{Doppler} Estimation Scheme for {AFDM-ISAC} System in Mixed Near-Field and Far-Field Scenarios}, 
  year={2025},
  volume={12},
  number={13},
  pages={22669-22682},
  doi={10.1109/JIOT.2025.3559924}}

@ARTICLE{ODDM_overall,
  author={Lin, Hai and Yuan, Jinhong},
  journal={IEEE Trans. Wireless Commun.}, 
  title={Orthogonal Delay-{Doppler} Division Multiplexing Modulation}, 
  year={2022},
  volume={21},
  number={12},
  pages={11024-11037},
  doi={10.1109/TWC.2022.3188776}}

@ARTICLE{ODDM_ISAC,
  author={Li, Meilin and Han, Chong and Jin, Shi},
  journal={IEEE J. Sel. Areas Commun.}, 
  title={Hybrid Beamforming with Orthogonal Delay-{Doppler} Division Multiplexing Modulation for {Terahertz} Sensing and Communication}, 
  year={2026},
  volume={44},
  number={},
  pages={746-759},
  doi={10.1109/JSAC.2025.3610799}}

@ARTICLE{OCDM_overall,
  author={Omar, Muhammad Shahmeer and Ma, Xiaoli},
  journal={IEEE Trans. Wireless Commun.}, 
  title={Performance Analysis of {OCDM} for Wireless Communications}, 
  year={2021},
  volume={20},
  number={7},
  pages={4032-4043},
  doi={10.1109/TWC.2021.3055070}}

@ARTICLE{OCDM_original,
  author={Ouyang, Xing and Zhao, Jian},
  journal={IEEE Trans. Commun.}, 
  title={Orthogonal Chirp Division Multiplexing}, 
  year={2016},
  volume={64},
  number={9},
  pages={3946-3957},
  doi={10.1109/TCOMM.2016.2594792}}

@ARTICLE{xia2025jointlocationvelocityestimation,
	author={Xia, Guoqing and Xiao, Pei and Luo, Qu and Ji, Bing and Zhang, Yue and Zhou, Huiyu},
	journal={IEEE Trans. Commun.}, 
	title={Joint Location and Velocity Estimation and Fundamental {CRLB} Analysis for Cell-Free {MIMO-ISAC}}, 
	year={2026},
	volume={74},
	number={},
	pages={7357-7374}}

@ARTICLE{Miller1978MCRB,
	author={Miller, R. and Chow Chang},
	journal={IEEE Trans. Inf. Theory},
	title={A modified {Cram\'{e}r-Rao} bound and its applications (Corresp.)},
	year={1978},
	volume={24},
	number={3},
	pages={398-400},
	doi={10.1109/TIT.1978.1055879}}

@ARTICLE{FanLiu6G,
  author={Liu, Fan and Cui, Yuanhao and Masouros, Christos and Xu, Jie and Han, Tony Xiao and Eldar, Yonina C. and Buzzi, Stefano},
  journal={IEEE J. Sel. Areas Commun.}, 
  title={Integrated Sensing and Communications: Toward Dual-Functional Wireless Networks for {6G} and Beyond}, 
  year={2022},
  volume={40},
  number={6},
  pages={1728-1767}}

@ARTICLE{multi-tone,
	author={Rife, D. C. and Boorstyn, R. R.},
	journal={The Bell System Technical Journal},
	title={Multiple tone parameter estimation from discrete-time observations},
	year={1976},
	volume={55},
	number={9},
	pages={1389-1410},
	doi={10.1002/j.1538-7305.1976.tb02941.x}
}

@ARTICLE{TWC_Hua,
  author={Hua, Haocheng and Han, Tony Xiao and Xu, Jie},
  journal={IEEE Trans. Wireless Commun.}, 
  title={{MIMO} Integrated Sensing and Communication: {CRB}-Rate Tradeoff}, 
  year={2024},
  volume={23},
  number={4},
  pages={2839-2854},
  doi={10.1109/TWC.2023.3303326}}

@ARTICLE{MIMO_radar_signal_design_TSP,
  author={Stoica, Petre and Li, Jian and Xie, Yao},
  journal={IEEE Trans. Signal Process.}, 
  title={On Probing Signal Design For {MIMO} Radar}, 
  year={2007},
  volume={55},
  number={8},
  pages={4151-4161}}

@phdthesis{braun2014ofdm,
  title={{OFDM} radar algorithms in mobile communication networks},
  author={Braun, Klaus Martin},
  year={2014},
  school={Commun. Eng. Lab, Karlsruher Institut für Technologie (KIT), Karlsruhe, Germany}
}

@book{Kay1993Fundamentals,
	author = {Kay, Steven M.},
	title = {Fundamentals of Statistical Signal Processing: Estimation Theory},
	year = {1998},
	publisher = {Prentice-Hall},
	address = {Englewood Cliffs, NJ, USA}
}

@ARTICLE{ISAC_OTFS_TWC,
	author={Keskin, Musa Furkan and Marcus, Carina and Eriksson, Olof and Alvarado, Alex and Widmer, Joerg and Wymeersch, Henk},
	journal={IEEE Trans. Wireless Commun.},
	title={Integrated Sensing and Communications With {MIMO-OTFS: ISI/ICI} Exploitation and Delay-{Doppler} Multiplexing},
	year={2024},
	volume={23},
	number={8},
	pages={10229-10246},
}

@ARTICLE{FanLiu_TIT,
	author={Liu, Fan and Zhang, Ying and Xiong, Yifeng and Li, Shuangyang and Yuan, Weijie and Gao, Feifei and Jin, Shi and Caire, Giuseppe},
	journal={IEEE Trans. Inf. Theory},
	title={{CP-OFDM} Achieves the Lowest Average Ranging Sidelobe Under {QAM/PSK} Constellations},
	year={2025},
	volume={71},
	number={9},
	pages={6950-6967},
	doi={10.1109/TIT.2025.3591267}}

@article{wang2025device,
	author={Wang, Zi-Jie and Tang, Aimin and Wang, Xudong},
	journal={IEEE J. Sel. Areas Commun.}, 
	title={Device-Free Localization in {ISAC} Networks: Performance Limits and {Fisher} Information-Based Cooperative Localization}, 
	year={2026},
	volume={44},
	number={},
	pages={626-641},
	doi={10.1109/JSAC.2025.3610408}}

@ARTICLE{range_compression_TSP,
	author={Li, Jian and Xu, Luzhou and Stoica, Petre and Forsythe, Keith W. and Bliss, Daniel W.},
	journal={IEEE Trans. Signal Process.},
	title={Range Compression and Waveform Optimization for {MIMO} Radar: A {Cram\'{e}r–Rao bound} Based Study},
	year={2008},
	volume={56},
	number={1},
	pages={218-232},
	doi={10.1109/TSP.2007.901653}}

@book{Complex_Stochastic_Processes,
	author = {Miller, K. S.},
	title = {Complex Stochastic Processes: An Introduction to Theory and Application},
	publisher={Addison-Wesley Publishing Company, Inc.},
	year = {1977},
	address = {Reading, MA, USA}
}

@INPROCEEDINGS{OFDM+OTFS,
	author={Gaudio, Lorenzo and Kobayashi, Mari and Bissinger, Bjorn and Caire, Giuseppe},
	booktitle={IEEE ICC Workshops},
	title={Performance Analysis of Joint Radar and Communication using {OFDM} and {OTFS}},
	year={2019},
	pages={1-6},
	doi={10.1109/ICCW.2019.8757044}}

\end{document}